\title{Investigating Simple Drawings of $K_n$ using \SAT{}}
\author{Helena Bergold}{ }{}{https://orcid.org/0000-0002-9622-8936}{}
\author{Manfred Scheucher}{ }{}{https://orcid.org/0000-0002-1657-9796}{}
\authorrunning{H. Bergold and M. Scheucher} 
\keywords{Boolean satisfiability problem (SAT), automated reasoning, automated theorem proving, artificial intelligence, planar graph, simple drawing} 
\newcommand{\manfred}[1]{\textcolor{blue}{Manfred: #1}}
\newcommand{\calD}{\mathcal{D}}
\newcommand{\calC}{\mathcal{C}}
\newcommand{\calT}{\mathcal{T}}
\newclass{\true}{True}
\newclass{\false}{False}
\newcommand{\simple}{\textnormal{s}}
\newcommand{\convex}{\textnormal{cv}}
\newcommand{\hconvex}{\textnormal{hcv}}
\newcommand{\cmonotone}{\textnormal{cm}}
\newcommand{\strongcmonotone}{\textnormal{scm}}
\newcommand{\gentwisted}{\textnormal{gt}}
\def\obstructionFour{\Pi_4^\textnormal{o}}
\def\obstructionFiveA{\Pi_{5,1}^\textnormal{o}}  
\def\obstructionFiveB{\Pi_{5,2}^\textnormal{o}}  
\def\obstructionconvexFiveA{\Pi_{5,1}^\textnormal{oc}}  
\def\obstructionconvexFiveB{\Pi_{5,2}^\textnormal{oc}}  
\def\obstructionhconvexSix{\Pi_{6}^\textnormal{oh}}
\renewcommand{\paragraph}[1]{\subsubsection*{#1}}
\begin{document}

\maketitle

\begin{abstract}
We present a \SAT{} framework which allows to investigate properties of simple drawings of the complete graph~$K_n$ using the power of AI. 
In contrast to classic imperative programming, where a program is operated step by step, our framework models mathematical questions as Boolean formulas which are then solved using modern \SAT{} solvers.
Our framework for simple drawings is based on a characterization via rotation systems and finite forbidden substructures.
We showcase its universality by addressing various open problems, reproving previous computational results and deriving several new computational results.
In particular, we test and progress on several unavoidable configurations such as variants of 
Rafla's conjecture on plane Hamiltonian cycles,
Harborth's conjecture on empty triangles, and
crossing families
for general simple drawings as well as for various subclasses.
Moreover, based our computational results we propose some new challenging conjectures.
\end{abstract}

\section{Introduction}
\label{sec:intro}

In logic and computer science, the Boolean satisfiability problem (\SAT{}) is the problem of determining whether a finite formula on Boolean variables has an assignment such that the formula evaluates to \true. 
\SAT{} is the first problem that was proven to be \NP-complete, which asserts that there is no efficient algorithm for the \SAT{} problem unless \P = \NP. 
Over the last years, however, the area of \SAT{} solving has seen tremendous progress and many problems that seemed to be out of reach a decade ago can now be handled routinely \cite{HKM2016_pythatrip,BalkoValtr2017,Heule2018,BrightCSKG21,HeuleScheucher2024}. 

In this article we present a versatile 
\SAT{} framework for investigating simple drawings of the complete graph~$K_n$. 
In contrast to arbitrary drawings of a fixed graph, which can be arbitrarily complex,  \emph{simple} drawings can be encoded by a finite Boolean formula.  
A drawing of a graph is \emph{simple} if every pair of edges has at most one common point, which is either a crossing or a common endpoint.

Many properties 
that are in focus of active research 
such as plane Hamiltonian cycles or empty triangles
do not depend on the actual drawing but only on the underlying combinatorics of the drawing.
More specifically, the purely combinatorial information about the pairs of crossing edges and the cyclic orders around the vertices -- the so-called \emph{rotation system} of the drawing -- is sufficient to study many problems. 

A crucial ingredient of our framework is a theorem by Kyn\v{c}l~\cite{Kyncl2020}
together with the computational result by \'Abrego et al.\ \cite{AbregoAFHOORSV2015}. 
It gives a compact characterization which systems of cyclic permutations are in fact a rotation system of a simple drawing.
We give the precise definitions and the characterization in \cref{sec:prelim}. 
In \Cref{sec:all_encoding} we describe the encoding of the combinatorial structures such that the solutions are in correspondence with rotation systems. Since the default input format of \SAT{} solvers are Boolean formulas in conjunctive normal form (CNF), 
we present a list of clauses which are then conjugated (logical “AND”) to one formula.
A clause is a disjunction (logical “OR”) of literals, 
where a literal is a Boolean variable or its negation. 
For more background see the standard textbook~\cite{HandbookSatisfiablity2009}.
In \Cref{app:characterizationRS} we moreover develop a SAT encoding for checking whether a system comes from a drawing and give an alternative proof of the characterization of \'Abrego et al.\ \cite{AbregoAFHOORSV2015}.

Additionally to the general setting of simple drawings, 
our framework allows 
to restrict the search space
to certain subclasses that appear in active research.
Among them are 
convex, hereditary convex, c-monotone, strongly c-monotone or generalized twisted drawings.
All of these subclasses 
can be characterized 
in terms of the rotation system and yield a reasonably compact \SAT{} encoding; see \cref{sec:subclasses} for more details.

In \cref{sec:applications}, we showcase the power of our framework. 
We formulate various open problems related to simple drawings of the complete graph as a CNF and test several conjectures for small values. 
In particular, we focus on the existence of plane Hamiltonian cycles and other plane substructures related to Rafla's conjecture  (\cref{sec:applications_planar,ssec:kyccles}), 
the existence of uncrossed edges and crossing families (\cref{sec:uncrossededges,sec:crossingfamilies}), 
unavoidable subdrawings of the complete graph (\cref{sec:unavoid}), 
and 
the number of empty triangles (\cref{sec:emptytriang}). 

The previously best computational results were obtained using enumerative and brute-force approaches~\cite{AbregoAFHOORSV2015}.
However, as there exist approximately 7 billion rotation systems of~$K_9$ and the number of rotation systems of~$K_n$ grows as $2^{\Theta({n^4})}$ \cite{Kyncl2009}, 
an enumerative approach for $n=10$ will not be possible in reasonable time with contemporary computers.
Our novel \SAT{}-based framework certainly surpasses these previous approaches as it allows to make investigations for 10 to 15 vertices with reasonably small resources -- of course depending on the desired property. 
Moreover, our framework  gives an excellent computational basis for searching for drawings on 20 or more vertices with certain properties (cf.\ \cref{sec:unavoid}),
even though a complete search seems unlikely at the current stage.

\section{Preliminaries}
\label{sec:prelim}

Various properties of a simple drawing
only depend on the combinatorics of the drawing. 
For example, to determine the existence of plane substructures, it is sufficient to know which pairs of edges cross -- the actual routing of edges does not play a role.
More generally, we will make extensive use of the cyclic orders of edges around the vertices, which in particular capture the information about crossing edge pairs. 
In this article we only consider properties which do not depend on the choice of the unbounded cell.
Hence, in the following we often do not distinguish between drawings on the sphere and drawings in the plane. 
	
For a given simple drawing $\calD$ and a vertex $v$ of $\calD$, the cyclic
order~$\pi_v$ of incident edges in counterclockwise order around~$v$ is called
the \emph{rotation of $v$} in $\calD$. The collection of rotations of all
vertices is called the \emph{rotation system} of~$\calD$. In the case of 
the complete graph~$K_n = (V,E)$, the rotation of a vertex $v$ is a cyclic
permutation on the remaining $n-1$ vertices $V \backslash \{v\}$.

A \emph{pre-rotation system} on $V$ consists of cyclic permutations $\pi_v$ on
the elements $V \backslash \{v\}$ for all $v \in V$. 
A pre-rotation system $\Pi = (\pi_v)_{v \in V}$ is \emph{drawable} if there is a
simple drawing of the complete graph with vertices $V$ such that its rotation
system coincides with~$\Pi$. Two pre-rotation systems are \emph{isomorphic}
if they are the same up to relabelling and reflection (i.e., all cyclic orders
are reversed). 
Two simple drawings are \emph{weakly isomorphic}
if their rotation systems are isomorphic. 
Note that there exist several types of isomorphisms in literature.

\begin{figure}[htb]
    \centering
    \includegraphics[scale = 0.9]{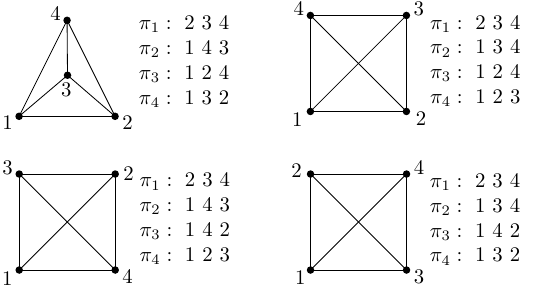}
    \caption{
    The four rotation systems on $4$ elements with their drawings. The second, third, and fourth are isomorphic.
    The first one represents the second isomorphism class. }
    \label{fig:rs_n4_noniso}
\end{figure}

On four vertices there are three non-isomorphic pre-rotation systems. 
The $K_4$ has exactly
two non-isomorphic simple drawings on the sphere: the drawing with no crossing
and the drawing with one crossing;
see \cref{fig:rs_n4_noniso}.
Hence, the two corresponding pre-rotation systems are drawable, and the third pre-rotation system is an obstruction to drawability. 
It is denoted by $\obstructionFour$ and described 
in \cref{fig:rotsys_obstructions}.

\begin{figure}[htb]
    \centering
    \includegraphics[scale = 0.9]{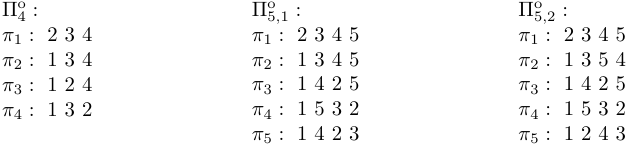}
    \caption{The three obstructions $\obstructionFour$, $\obstructionFiveA$, and $\obstructionFiveB$ for rotation systems.}
    \label{fig:rotsys_obstructions}
\end{figure}
	
For a pre-rotation system $\Pi = (\pi_v)_{v \in V}$ and a subset of the elements 
$I \subseteq V$, the \emph{sub-configuration induced by $I$} is
$\Pi|_I = (\pi_v|_I)_{v \in I}$, where $\pi_v|_I$ denotes the cyclic permutation
obtained by restricting $\pi_v$ to $I \backslash \{v\}$. A pre-rotation
system $\Pi$ on $V$ \emph{contains} $\Pi'$ if there is an induced
subconfiguration $\Pi|_I$ with $I \subseteq V$ isomorphic to~$\Pi'$.
A pre-rotation system not containing $\Pi'$ is called
\emph{$\Pi'$-free}.

A crossing pair of edges involves four vertices. 
By studying drawings
of $K_4$ we learn that a crossing pair of edges
can be identified from the underlying rotation system. 
Hence, the pairs of crossing edges in a drawing of $K_n$ are fully determined by the rotation system.

\begin{lemma}			
    \label{observation:basics}
    The following two statements hold:
    \begin{enumerate}[(i)]
        \item \label{item:obstructionfour_notdrawable}
        A pre-rotation system containing $\obstructionFour$ is not drawable.
    			
        \item 
        \label{item:obstructionfour_crossingsdetermined}
        Let $\Pi$ be a $\obstructionFour$-free pre-rotation system on $[n]$. The
        subconfiguration induced by a 4-element subset is drawable and determines which pairs of
        edges cross in the drawing.
    \end{enumerate}
\end{lemma}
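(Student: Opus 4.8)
The plan is to treat the four-vertex situation as a finite base case and lift it to $K_n$ using that drawability is hereditary under induced subconfigurations. Concretely, I would first record the hereditary property: if $\Pi=(\pi_v)_{v\in V}$ is drawable via a simple drawing $\calD$ and $I\subseteq V$, then deleting the vertices of $V\setminus I$ together with their incident edges yields a drawing of the complete graph on $I$ in which the remaining edges are drawn exactly as before. This drawing is again simple, since erasing edges can only remove intersection points and never creates a new one, and the rotation of each surviving vertex $v$ is obtained from $\pi_v$ by forgetting the deleted neighbours. Hence its rotation system is precisely $\Pi|_I$, so $\Pi|_I$ is drawable whenever $\Pi$ is.

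For part (i) I would then argue by contraposition. If $\Pi$ contains $\obstructionFour$, there is a $4$-element set $I$ with $\Pi|_I$ isomorphic to $\obstructionFour$. Were $\Pi$ drawable, the hereditary property would make $\Pi|_I$, and hence $\obstructionFour$, drawable as well; but $\obstructionFour$ is exactly the third (non-drawable) isomorphism class on four elements identified above, since the $K_4$ admits only the crossing-free and the one-crossing simple drawing. This contradiction proves that a pre-rotation system containing $\obstructionFour$ cannot be drawable. For part (ii), let $\Pi$ be $\obstructionFour$-free and $I$ any $4$-element subset. By assumption $\Pi|_I$ is not isomorphic to $\obstructionFour$, and since there are only three isomorphism classes of pre-rotation systems on four elements -- the two remaining ones both being rotation systems of a simple drawing of $K_4$ -- the configuration $\Pi|_I$ is drawable. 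To see that the crossings are determined, I would note that the only candidate crossing pairs among four vertices are the three perfect matchings $\{12,34\}$, $\{13,24\}$, $\{14,23\}$, at most one of which crosses (the one-crossing drawing being the unique type with a crossing). A direct inspection of the finitely many drawable four-element pre-rotation systems then shows that the crossing-free class has no crossing, while in the one-crossing class the crossing matching can be read off unambiguously from the cyclic orders, so weakly isomorphic $K_4$-drawings share the same crossing pair.

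The main obstacle is the genuine combinatorial content sitting inside this base case, namely verifying that the rotation system of a one-crossing $K_4$ really does pin down which of the three matchings crosses; everything else is bookkeeping built on the enumeration quoted above and on the hereditary step. In practice I would discharge this by giving an explicit crossing criterion in terms of the rotations -- stating, for a given labelling, exactly when $ac$ crosses $bd$ according to the relative cyclic positions of the other neighbours in $\pi_a,\pi_b,\pi_c,\pi_d$ -- and checking it against the two drawable classes. The hereditary property then guarantees that this same reading applies verbatim inside every $4$-element subconfiguration of an arbitrary $\obstructionFour$-free system $\Pi$, which is exactly the assertion of part (ii).
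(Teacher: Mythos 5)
Your proof is correct and takes essentially the same approach as the paper: the paper also derives both parts from the finite classification on four elements (two drawable classes plus the obstruction $\obstructionFour$, cf.\ \cref{fig:rs_n4_noniso}) combined with the fact that restricting a simple drawing to a vertex subset yields a simple drawing with the restricted rotation system, and it likewise reads the crossing pair of a one-crossing $K_4$ directly off the labeled rotation system (cf.\ \cref{sec:crossingsencoding}). You merely make the hereditary step and the contrapositive explicit, which the paper leaves implicit.
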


Part \eqref{item:obstructionfour_crossingsdetermined} of the lemma allows to talk about the crossing pairs of edges of a $\obstructionFour$-free pre-rotation system, even if there is no associated drawing.
	\'Abrego et al.~\cite{AbregoAFHOORSV2015} generated all pre-rotation systems
	for up to $9$ vertices and used a drawing program based on back-tracking to
	classify the drawable ones. 
	
	\begin{restatable}[\cite{AbregoAFHOORSV2015}]{proposition}{propclassificationRS}
		\label{proposition:rotsys_classification_n6}
		A pre-rotation system on $n \le 6$ elements
		is drawable if and only if it does not contain
		$\obstructionFour$, $\obstructionFiveA$, or $\obstructionFiveB$
		(cf.\ \cref{fig:rotsys_obstructions}) as a subconfiguration.
	\end{restatable}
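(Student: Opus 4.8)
The plan is to prove the two directions of the equivalence separately, treating the forward (necessity) direction as a short structural argument and the reverse (sufficiency) direction as an essentially exhaustive verification over $n \le 6$.

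For the forward direction, that drawability implies freeness from the three obstructions, I would first record that drawability is \emph{hereditary}: if $\Pi$ is the rotation system of a simple drawing $\calD$, then for any $I \subseteq V$ the induced subconfiguration $\Pi|_I$ is realized by the subdrawing of $\calD$ on the vertices of $I$, which is again a simple drawing of $K_{|I|}$. Hence a drawable pre-rotation system cannot contain any non-drawable subconfiguration, and it suffices to check that each obstruction is itself non-drawable. For $\obstructionFour$ this is exactly the first part of \cref{observation:basics}. Since $\obstructionFiveA$ and $\obstructionFiveB$ are $\obstructionFour$-free, the second part of \cref{observation:basics} fixes their crossing patterns, and I would verify by a short finite case analysis that no simple drawing of $K_5$ simultaneously realizes the prescribed rotations and this forced set of crossings.

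For the reverse direction, that freeness from the obstructions implies drawability, I would argue by the value of $n$. The cases $n \le 3$ are immediate, and $n = 4$ follows from the earlier discussion: the three non-isomorphic pre-rotation systems on four elements are the two drawable ones together with $\obstructionFour$, so being $\obstructionFour$-free already forces drawability. For $n = 5$ I would enumerate, up to isomorphism, all $\obstructionFour$-free pre-rotation systems and confirm that every one of them except $\obstructionFiveA$ and $\obstructionFiveB$ admits an explicit realizing drawing. The substantive case is $n = 6$: here I would enumerate all pre-rotation systems on six elements whose $4$-subsets avoid $\obstructionFour$ and whose $5$-subsets avoid $\obstructionFiveA$ and $\obstructionFiveB$, and exhibit a simple drawing for each isomorphism class. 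The crucial assertion is that \emph{no} genuinely new six-vertex obstruction is needed.

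I expect the main obstacle to be the realizability step at $n = 6$, namely passing from a purely combinatorial obstruction-free system to an actual drawing. Knowing the crossing pattern and the rotations does not by itself produce curves in the plane: one must route the edges so that precisely the prescribed pairs cross, every rotation is respected, and the result is simple. I would attempt this incrementally, taking a realizing drawing of an obstruction-free five-vertex subconfiguration and inserting the sixth vertex together with its five incident edges, using the rotation at the new vertex and the positions forced in the remaining rotations to guide the routing. Proving that such an insertion always succeeds exactly when the full system is obstruction-free, rather than merely checking each system by hand, is the delicate point; in practice this is where the back-tracking drawing search of \'Abrego et al.\ does the heavy lifting, and the finiteness of the search for $n \le 6$ is what makes the verification feasible.
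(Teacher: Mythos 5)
Your overall skeleton---heredity plus non-drawability of the three obstructions for necessity, and exhaustive enumeration of obstruction-free systems with a per-class realizability check for sufficiency---matches the paper's proof, and the enumeration side is unproblematic (the paper finds 3 classes on 4 elements, 7 $\obstructionFour$-free classes on 5 elements, and 102 obstruction-free classes on 6 elements). The genuine gap is that you have no finite, certified criterion for deciding whether a single pre-rotation system is drawable, and supplying such a criterion is precisely the content of the paper's proof. The space of candidate drawings is a priori infinite, since edges can be routed by arbitrary curves, so ``a short finite case analysis'' for the non-drawability of $\obstructionFiveA$ and $\obstructionFiveB$, and ``exhibit a simple drawing for each class'' for $n=6$, are not yet proofs. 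Moreover, your fallback---the back-tracking drawing search of \'Abrego et al.---is exactly the original argument that the paper's appendix proof is designed to replace, so invoking it makes the attempt circular as an independent verification.

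The paper closes this gap with a reduction you are missing: for a $\obstructionFour$-free system $\Pi$ the crossing pairs are already determined (\cref{observation:basics}), so the only unknown combinatorial data is the linear order of the crossings along each edge. Choosing such orders $\Sigma$ yields a finite graph $G_\Sigma$ (the candidate planarization), and \cref{proposition:planarization_encoding_correct} states that $\Pi$ is drawable if and only if some choice of $\Sigma$ makes $G_\Sigma$ planar. Proving this equivalence is the real work: one must show that in any plane embedding of a planar $G_\Sigma$ the degree-four cross-vertices are genuine crossings rather than touchings (a $K_4$ argument), and that any drawing realizing the prescribed crossing pairs automatically has rotation system $\Pi$ up to reflection (\cref{prop:PRS_different_ATgraphs}). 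With this equivalence in hand, drawability becomes a finite search over the orders $\Sigma$ together with a planarity test, which the paper encodes via Schnyder's order-dimension characterization and discharges by SAT. Your incremental vertex-insertion idea, by contrast, would additionally require proving that insertability does not depend on which realizing drawing of the five-vertex subconfiguration you start from---a statement you neither prove nor reduce to finitely many checks.
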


	Moreover,
	Kyn\v{c}l showed that a pre-rotation system is drawable if and only if all induced 4-, 5-, and 6-element subconfigurations are drawable \cite[Theorem 1.1]{Kyncl2020}.
	Together with \cref{proposition:rotsys_classification_n6} this yields the following characterization:
	
\begin{restatable}{theorem}{thmclassificationRS}
    \label{theorem:rotsys_5tuples_characterization}
    A pre-rotation system on $n$ elements is drawable if and only if
    it does not contain $\obstructionFour$, $\obstructionFiveA$ or $\obstructionFiveB$ (cf.\ \cref{fig:rotsys_obstructions})
    as a subconfiguration.
\end{restatable}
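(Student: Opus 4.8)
The plan is to derive the characterization by gluing together the two results already available in the excerpt: the finite classification of \cref{proposition:rotsys_classification_n6} for $n \le 6$, and Kyn\v{c}l's local-to-global theorem \cite{Kyncl2020}, which reduces drawability of an arbitrary pre-rotation system to drawability of all of its induced $4$-, $5$-, and $6$-element subconfigurations. The only extra ingredients I need are two elementary structural facts about subconfigurations, which I would record first: \textbf{(a)} containment is transitive, so that whenever an induced subconfiguration $\Pi|_I$ contains an obstruction, $\Pi$ contains it as well; and \textbf{(b)} every induced subconfiguration of a drawable pre-rotation system is itself drawable, since restricting any realizing drawing of $\Pi$ to the vertex set $I$ yields a simple drawing of the complete graph on $I$ whose rotation system is exactly $\Pi|_I$.

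For the forward implication, suppose $\Pi$ is drawable. Each of the three configurations is non-drawable: $\obstructionFour$ by \cref{observation:basics}\,(i), and $\obstructionFiveA,\obstructionFiveB$ by \cref{proposition:rotsys_classification_n6} (each is a pre-rotation system on five elements, hence covered by the proposition, and it contains itself as a subconfiguration, so the proposition certifies its non-drawability). If $\Pi$ contained one of them as an induced subconfiguration $\Pi|_I$, then by fact~\textbf{(b)} this obstruction would be drawable, a contradiction. Hence a drawable $\Pi$ is free of all three obstructions.

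For the backward implication, assume $\Pi$ contains none of $\obstructionFour$, $\obstructionFiveA$, $\obstructionFiveB$. By fact~\textbf{(a)}, no induced subconfiguration of $\Pi$ contains any of these either; in particular every induced subconfiguration on $4$, $5$, or $6$ elements is obstruction-free and therefore drawable by \cref{proposition:rotsys_classification_n6}. Kyn\v{c}l's theorem then yields that $\Pi$ itself is drawable, completing the characterization.

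The substantive mathematics is entirely outsourced to the two cited theorems, so the main obstacle is not depth but precision: I must ensure that the notions of \emph{containment} and \emph{induced subconfiguration} behave transitively and interact correctly with drawability, and that the three figures referenced in \cref{fig:rotsys_obstructions} are exactly the obstructions to which the classification of \cref{proposition:rotsys_classification_n6} refers. Stating facts~\textbf{(a)} and~\textbf{(b)} explicitly up front is what makes the otherwise two-line combination airtight.
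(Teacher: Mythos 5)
Your proposal is correct and follows exactly the paper's own route: the paper derives \cref{theorem:rotsys_5tuples_characterization} by combining Kyn\v{c}l's local-to-global theorem with \cref{proposition:rotsys_classification_n6}, precisely as you do. The only difference is that you spell out the bookkeeping (transitivity of containment and heredity of drawability under taking induced subconfigurations) that the paper leaves implicit in its one-sentence derivation.
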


While the focus of this article is on the structural investigation of simple drawing of~$K_n$,
we provide an independent proof of \cref{proposition:rotsys_classification_n6}
in \cref{app:characterizationRS}, 
which
utilizes \SAT{} solvers on two levels. First we use a \SAT{} solver to enumerate all non-isomorphic pre-rotation systems. Then, for each pre-rotation system, we create a \SAT{} instance to test its drawability.
More specifically, 
to decide whether a given pre-rotation system
yields a drawing, 
we read the crossings along each edge (cf.\ \cref{observation:basics})
and then use \SAT{} to find an ordering of the crossings along all edges so that the obtained subdivided graph corresponds to a planarization of a drawing.
To ensure planarity, we utilize Schnyder's
characterization of planar graphs \cite{Schnyder1989}; see \cref{app:planar_enc}.
For further details and more planarity encodings suitable for the \SAT{}-based investigations of planar graphs we refer the interested reader to~\cite{KirchwegerSS2023}.

\section{SAT Encoding}
\label{sec:all_encoding}

We develop a versatile SAT framework for rotation systems of simple drawings of~$K_n$.
It comes with many optional parameters that make it possible  
to search for examples with (a combination of) certain properties
and to restrict the search to certain subclasses.
To verify conjectures for small $n$ one can then create a specific CNF instance to search for a counterexample and use a SAT solver such as CaDiCaL \cite{Biere2019} to prove unsatisfiablity.
It is worth noting that unsatisfiablity can also be certified with an independent proof-verification tool such as drat-trim \cite{WetzlerHeuleHunt2014}.
In the following we describe the basic encodings.

\subsection{Pre-Rotation Systems}
\label{ssec:preRS}
A pre-rotation system on $[n]$ consists of $n$ cyclic permutations $\pi_a$ on $[n] \setminus \{a\}$.
One way to encode the cyclic permutations in terms of a CNF is to use ordinary permutations.
For this we introduce a Boolean variable $X_{aib}$ for every pair of distinct vertices $a, b \in [n]$ and every index $i \in [n-1]$, to indicate whether $\pi_a(i)=b$.
To ensure that for every $a \in [n]$ the variables $X_{aib}$ indeed model a  permutation $\pi_a$, we 
introduce clauses such that for every $i \in [n-1]$ there is exactly one $b \in [n] \setminus \{a\}$ for which $X_{aib}$ is \true.
More formally, we use the clauses
\begin{itemize}
    \item $\bigvee_{b:b \neq a} X_{aib}$ \quad for every $a \in [n]$ and $i \in [n-1]$, \quad and 
    \item $\neg X_{aib_1} \vee \neg X_{aib_2}$ \quad for every distinct $a,b_1,b_2 \in [n]$ and $i \in [n]$.
\end{itemize}
The first clause ensures that for fixed $a$ and $i$ at least one of the variables $X_{aib}$ is \true, and the second implies that at most one is \true. 
Since we deal with cyclic permutations, we assume without loss of generality that the first element in a permutations is the smallest. Therefore, we add the unit clauses 
\begin{itemize}
    \item $X_{112}=\true$ \quad and \quad 
     $X_{k11}=\true$ \quad for $k \geq 2$.
\end{itemize}
Since the $X$-variables only give us a ``global'' description of the entire rotation system, we introduce additional auxiliary variables to describe the rotation system around each vertex in more detail.
We introduce an auxiliary variable $Y_{abcd}$ for every four distinct vertices $a,b,c,d \in [n]$
to indicate whether $b,c,d$ appear in counterclockwise order in the rotation of~$a$. 
Note that
precisely one of two variables $Y_{abcd}$ and $Y_{abdc}$ is \true, and that $Y_{abcd}=Y_{acdb}=Y_{adbc}$. 
To synchronize the $X$-variables with the $Y$-variables,
we assert for every distinct $a,b,c,d \in [n]$ and $i,j,k \in [n-1]$ that
\begin{itemize}
    \item $(X_{aib} \wedge X_{ajc} \wedge X_{akd}) \rightarrow \phantom{\neg} Y_{abcd}$ 
    \quad if $i<j<k$ or $k<i<j$ or $j<k<i$,
    \quad and 
    \item $(X_{aib} \wedge X_{ajc} \wedge X_{akd}) \rightarrow \neg Y_{abcd} $
    \quad if $i<k<j$ or $k<j<i$ or $j<i<k$.
\end{itemize}
To model this as a CNF, we use that $(A_1 \wedge \ldots \wedge A_k) \rightarrow B$ is logically equivalent to $\neg A_1 \vee \ldots \vee \neg A_k \vee B$.
While the above encoding derives the values of the $Y$-variables from the $X$-variables, we describe in the following how to directly encode the cyclic permutations only in terms of the $Y$-variables. 
In fact, when we used the \SAT{} solver CaDiCaL we observed that the following clauses were learned after some time.
By adding these clauses to the CNF,
the solver performed significantly faster. 

A mapping $\sigma: {X \choose 3} \to \{+,-\}$ encodes a \emph{cyclic permutation} if for all distinct four elements $i,j,k,\ell$ with $i<j<k<\ell$ the sign sequence 
$(\sigma(i,j,k),\sigma(i,j,\ell),\sigma(i,k,\ell),\sigma(j,k,\ell))$ is one of the following six:
    $\{(+,+,+,+), (+,+,-,-), (+,-,-,+), (-,-,-,-), (-,-,+,+), (-,+,+,-) \}$
In other words the following ten sign patterns are forbidden: 
\begin{align*}
    \{&(+,-,+,-),(-,+,-,+),(+,-,-,-),(-,+,-,-),(-,-,+,-),\\
    &(-,-,-,+),(-,+,+,+),(+,-,+,+),(+,+,-,+),(+,+,+,-) \}
\end{align*} 
To forbid the pattern $(+,-,+,-)$ for each $a \in [n]$ and all $b<c<d<e $ from $[n] \setminus \{a\}$, we use the following clause: 
\begin{itemize}
    \item $\neg Y_{abcd} \vee  Y_{abce} \vee \neg Y_{abde} \vee Y_{acde} $ 
\end{itemize}
The other patterns are forbidden analogously.
To further improve the encoding, these ten
clauses of length~4 can be replaced by the following eight clauses of length~3:
\begin{multicols}{2}
\begin{itemize}
    \item $ \neg Y_{abcd} \vee \phantom{\neg} Y_{abce} \vee \neg Y_{abde}$
    \item $ \phantom{\neg} Y_{abcd} \vee  \neg Y_{abce} \vee \phantom{\neg} Y_{abde}$
    \item $ \neg Y_{abce} \vee \phantom{\neg} Y_{abde} \vee \neg Y_{acde}$
    \item $ \phantom{\neg} Y_{abce} \vee  \neg Y_{abde} \vee \phantom{\neg} Y_{acde}$
    \item $ \neg Y_{abcd} \vee \neg Y_{abde} \vee \phantom{\neg} Y_{acde}$
    \item  $ \phantom{\neg} Y_{abcd} \vee \phantom{\neg} Y_{abde} \vee \neg Y_{acde}$
    \item $ \neg Y_{abcd} \vee \phantom{\neg} Y_{abce} \vee \phantom{\neg} Y_{cbde}$
    \item $ \phantom{\neg} Y_{abcd} \vee \neg Y_{abce} \vee \neg Y_{cbde}$
\end{itemize}
\end{multicols}

\subsection{Rotation System}
\label{ssec:RS}
To restrict the search space to
drawable pre-rotation systems,
that is,
rotation systems of the complete graph~$K_n$,  
we introduce clauses to forbid the drawability-obstructions. 	\cref{theorem:rotsys_5tuples_characterization}
asserts that a pre-rotation system is drawable if and only if none of the obstructions $\obstructionFour$, $\obstructionFiveA$ and $\obstructionconvexFiveB$ (depicted in \cref{fig:rotsys_obstructions}) occur as a substructure.
To ensure that the obstructions~$\obstructionFour$ or its reversed does not occur in any induced subconfiguration,
the two clauses
\begin{itemize}
    \item $\neg Y_{abcd} \vee
    \neg Y_{bacd} \vee
    \neg Y_{cabd} \vee
    \neg Y_{dacb}$ \quad and
    \quad  $\phantom{\neg} Y_{abcd} \vee
    \phantom{\neg} Y_{bacd} \vee
    \phantom{\neg} Y_{cabd} \vee
    \phantom{\neg} Y_{dacb}$
\end{itemize}
must be fulfilled for every four distinct vertices $a,b,c,d \in [n]$.
In an analogous manner, 
we can forbid $\obstructionFiveA$ and $\obstructionFiveB$ as subconfigurations.
In total, we have $\Theta(n^5)$  clauses to assert that solutions of the CNF are in one-to-one correspondence with rotation systems on $[n]$ elements.

By enabling the flags \verb|-v4| and \verb|-v5| one can disable the above clauses for valid 4-tuples and valid 5-tuples, respectively, and enumerate pre-rotation systems.

\subsection{Symmetry Breaking}

Since most properties discussed in this article are invariant to relabeling and reflection,
we attempt to break these symmetries to get a one-to-one correspondence between the solutions of the CNF and 
the isomorphism classes
of (pre-)rotation systems.

Since there are $n!$ relabelings of the vertices and the possibility of reflection,
there are up to $2n!$ (pre-)rotation systems in a weak isomorphism class. 
We define a unique representative of every isomorphism class. 
For this, observe that
the $X$-variables of
a pre-rotation system~$\Pi$ can be read
as an $n \times (n-1)$ matrix, 
where the $a$-th row encodes the permutation of the $a$-th element.
By concatenating the rows of the matrix, we consider a vector of length $n \cdot (n-1)$.
The \emph{lexicographic minimal} vector among all relabelings and reflections is our representative. 

At first glance, it seems that 
we have to check all $n!$ relabellings and their reflections
to determine whether a pre-rotation system~$\Pi$
is a lexicographic minimum.
However, the lexicographic minimum has to be \emph{natural},
that is, the rotation around the first vertex is the identity permutation. To assert that a pre-rotation system is natural,
we use the unit clauses:
\begin{itemize}
    \item $Y_{1abc} = \true$ for all $a,b,c \in [n] \setminus \{1\}$ such that $a<b<c$.
\end{itemize} 
Being natural is only a necessary condition but not sufficient; see for example 
the three natural rotation systems of $K_4$ 
with a crossing depicted in \cref{fig:rs_n4_noniso}.
However, since the choice of the first and second vertex in a natural labeling fully determines the first row and thus the full permutation of the vertices, we only have to test $n(n-1)$ relabellings and their reflections.
Recall that we do not actually run an algorithm to perform those tests 
but ingeniously add constraints to the CNF so that all solutions fulfill the desired properties.

Besides the identity permutation on $[n] \backslash \{1\}$ in the first row of the matrix in a natural pre-rotation system, w e further assume that the first element in every tow is the smallest one of the cyclic permutation. 
To ensure that it is a lexicographical minimum,
we introduce auxiliary variables and clauses to compute the 0-1-vector of all $2n(n-1)-1$ reflections and relabelings.
Additional constraints in terms of the $X$-variables assert that the original vector is indeed the lexicographically smallest. 
Since the encoding in terms of Boolean variables is quite technical, we omit the details and refer the interested reader to the source code~\cite{supplemental_data_unblind}.

While our framework by default restricts the search space to natural rotation systems (this can be disabled via the flag \verb|-nat|), 
the restriction to lexicographic minima needs to be explicitly enabled using the flag \verb|-lex| as it seems to be quite expensive in practice. Even though it should reduce the search space by a factor of up to $2n(n-1)$ in theory,  \SAT{} solvers are slowed down in practice due to the large amount of auxiliary variables and clauses.

\subsection{Crossings}
\label{sec:crossingsencoding}
Many open questions related to simple drawings concern plane substructures and crossings.
To encode crossings in terms of a CNF 
we 
introduce auxiliary variables $C_{abcd} = C_{ef}$ to indicate whether two non-adjacent edges $e=\{a,b\}$ and $f=\{c,d\}$ cross 
(cf.\ \cref{observation:basics}\eqref{item:obstructionfour_crossingsdetermined}).
If the drawing of $K_4$ with vertices $\{a,b,c,d\}$ 
is crossing-free, the rotation system is unique up to relabelling or reflection.
Otherwise, if there is a crossing, we have one of the following: $ab$ crosses $cd$, $ac$ crosses $bd$, or $ad$ crosses $bc$ and this is fully determined by the rotation system.
For each of these three cases, there are two subcases.
If $ab$ crosses $cd$, then the directed edge $\overrightarrow{cd}$ either traverses $\overrightarrow{ab}$ from the left to the right or vice versa.
Each subcase corresponds to a unique rotation system.
The three cases are depicted in \cref{fig:rs_n4_noniso}. The other cases are the reversed rotation systems which correspond to the reflected drawings.
To indicate the subcase, we
introduce an auxiliary variables $D_{abcd}$ and the clauses $C_{abcd} = D_{abcd} \vee D_{abdc}$.

To prove that every simple drawing of $K_n$ has a particular structure, we encode the contraposition. 
Solutions of the CNF encode potential examples which violate this structure. If the CNF is unsatisfiable, the statement follows. 
For more details see~\cref{app:encodings_planesubstructures}.

\section{Subclasses}
\label{sec:subclasses}

Our framework comes with plenty of optional parameters to restrict the search space and/or to require certain properties. In the following we discuss some subclasses that can be fully characterized in terms of the rotation system and moreover yield a reasonably sized encoding in terms of a~CNF.
For a detailed overview of subclasses studied in literature and their relationships 
we refer to \cite[Figure~7]{aov-2024-tcfhcsdcg}.

\subsection{Convex and H-convex Drawings}

Convex drawings are a subclass between \emph{geometric} drawings where all edges are drawn as straight-line segments and simple drawings. As indicated by the name, they generalize the classical notion of convexity.
To give a formal definition, we need to look at the \emph{triangles}, i.e., 
the subdrawings induced by triples of vertices. 
Since in a simple drawing of~$K_n$ the edges of a triangle do not
cross, a triangle partitions the plane (resp.\ sphere) into exactly two
connected components. 
The closures of these components are the two
\emph{sides} of the triangle. A side $S$ of a triangle is \emph{convex} if
every edge that has its two vertices in~$S$ is fully drawn inside~$S$. A
simple drawing of $K_n$ is \emph{convex} if every triangle has a convex
side. Moreover, a drawing is \emph{h-convex} (short for hereditary
convex) 
if we can choose a convex side $S_T$ for every triangle $T$ such that,
for every triangle $T'$ contained in $S_T$, it holds $S_{T'} \subseteq S_T$.
For further aspects of the
convexity hierarchy we refer to
\cite{ArroyoMRS2017_convex,ArroyoMRS2017_pseudolines,BFSSS_TDCTCG_2022}.

Arroyo et al.~\cite{ArroyoMRS2017_convex} showed that convex and h-convex
drawings can be characterized via finitely many forbidden subconfigurations.
A simple drawing is \emph{convex} if and only if it does not contain
$\obstructionconvexFiveA$ or $\obstructionconvexFiveB$ (cf.
\cref{fig:rotsys_obstructions_convex}) as a subconfiguration.
Moreover, a convex drawing is \emph{h-convex} if and only if it does not
contain $\obstructionhconvexSix$ (cf.
\cref{fig:rotsys_obstructions_hconvex}) as a subconfiguration.

\begin{figure}[htb]
    \centering
    \includegraphics[page=2, scale = 0.9]
    {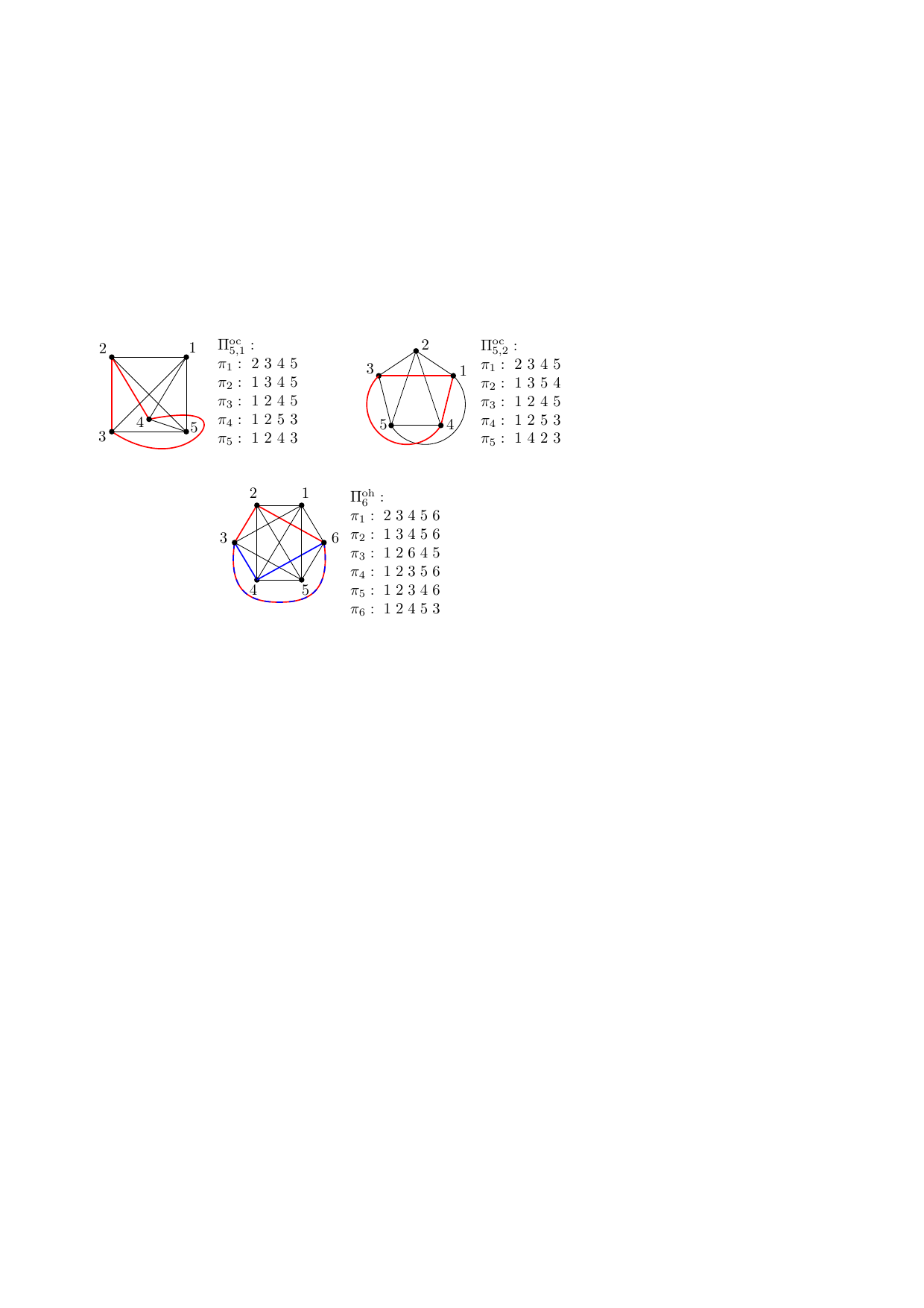}
    \caption{The two obstructions $\obstructionconvexFiveA$ (left) and $\obstructionconvexFiveB$ (right) for convex drawings. A triangle which is not convex is highlighted in red.}
    \label{fig:rotsys_obstructions_convex}
\end{figure}

\begin{figure}[htb]
    \centering
    \includegraphics[page=3,scale=0.9]
    {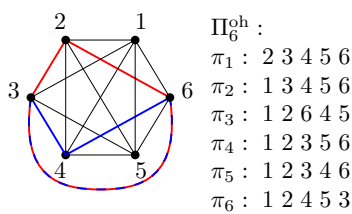}
    \caption{The obstruction $\obstructionhconvexSix$ for h-convex drawings.
    The convex side of the red triangle is the bounded side and for the blue triangle its the unbounded side.  
    }
    \label{fig:rotsys_obstructions_hconvex}
\end{figure}

Our framework by default enumerates rotation systems of simple drawings of $K_n$.
To restrict the search space to convex drawings and h-convex drawings, one can use the flag \verb|-c| and \verb|-hc|, respectively.
With these flags, $\Theta(n^5)$ (resp.\ $\Theta(n^6)$) clauses are added to forbid the configurations $\obstructionconvexFiveA$ and $\obstructionconvexFiveB$ (resp.\ $\obstructionhconvexSix$) as induced subconfigurations,
similar as in~\Cref{ssec:RS}.

\subsection{C-Monotone and Strongly C-Monotone Drawings}

A simple drawing $\calD$ in the plane is \emph{c-monotone} if there is a point $O$ such that every ray emanating from $O$ intersects every edge of $\calD$ at most once. 
Moreover, $\calD$ is \emph{strongly c-monotone} if, for every vertex~$v$, there is a ray which does not intersect edges incident to~$v$. 
More generally, we say that a rotation system is (strongly) c-monotone if it has a (strongly) c-monotone drawing.
Even though on the first glance this definition seems to strongly depend on the actual drawing, 
Aichholzer et al.~\cite{AichholzerGTVW22twisted} proved the following combinatorial characterization: 
a rotation systems on elements $V=\{a_1,\ldots,a_n\}$ is c-monotone if and only if it can be extended by two additional elements $b_1,b_2$ such that 
no two edges $b_1 a_i$ and $b_2 a_j$ cross.
Moreover, it is strongly c-monotone if for every $i$ there exists $j=j(i)$ such that the star centered at $a_i$ does not cross the edges $a_jb_1$ and $a_jb_2$.
In our framework, the flags \verb|-cm| and \verb|-scm| restrict to c-monotone and strongly c-monotone rotations systems, respectively.

\subsection{Generalized Twisted Drawings}

A c-monotone drawing of $K_n$ is \emph{generalized twisted} if there is one ray emanating of the point~$O$ that intersects all edges of the drawing.
This can be implemented in a similar way as above. 
As shown by Aichholzer et al.~\cite{agtvw-2023-crsgtd5t}, generalized twisted drawings with $n \geq 7$ are exactly those simple drawings, where every subconfiguration on 5 elements is isomorphic to~$\obstructionconvexFiveA$. 
This combinatorial description can be encoded in terms of rotation systems by forbidding all other 4 possibilities for subconfigurations on 5 elements. 
This characterization can be activated in our framework using the flag \verb|-gt|.

\section{Applications and Showcases of our Framework}
\label{sec:applications}

\subsection{Plane Hamiltonian Substructures}
\label{sec:applications_planar}

One prominent conjecture on plane substructures in simple drawings of $K_n$ 
is by Rafla.

\begin{conjecture}[{Rafla \cite{Rafla1988}}]
    \label{conjecture:rafla}
    Every simple drawing of $K_n$ with $n \ge 3$ contains a plane Hamiltonian cycle.
\end{conjecture}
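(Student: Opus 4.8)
The plan is to treat \cref{conjecture:rafla} not as a statement to be proved in full generality---it is a long-standing open problem---but as a family of finite assertions, one per fixed $n$, each of which can be decided within the \SAT{} framework. Following the contraposition recipe of \cref{sec:crossingsencoding}, I would build, for a given $n$, a CNF whose satisfying assignments are exactly the drawable pre-rotation systems (via the permutation and obstruction clauses of \cref{ssec:preRS,ssec:RS}) that additionally admit \emph{no} plane Hamiltonian cycle. If this CNF is unsatisfiable, then every simple drawing of $K_n$ contains a plane Hamiltonian cycle, confirming the conjecture for that value of $n$; conversely, a model would be an explicit counterexample. Crucially, whether a cycle is plane is a purely combinatorial question: by \cref{observation:basics}\eqref{item:obstructionfour_crossingsdetermined} the crossing variables $C_{ef}$ of \cref{sec:crossingsencoding} are fully determined by the rotation system, so no actual drawing is ever needed.

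The heart of the encoding is to express ``this rotation system has no plane Hamiltonian cycle'' in terms of the $C_{ef}$. A Hamiltonian cycle, given as a cyclic ordering $c$ of $[n]$, is plane precisely when no two of its non-adjacent edges cross (adjacent edges share a vertex and cannot cross in a simple drawing). Hence for each cyclic ordering $c$ I would add the clause
\[
\bigvee_{\{e_i,e_j\} \text{ non-adjacent in } c} C_{e_i e_j} ,
\]
which asserts that $c$ carries at least one crossing. Conjoining these clauses over all cyclic orderings forces every Hamiltonian cycle to be non-plane, so a model of the whole CNF is exactly a counterexample drawing.

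I expect the main obstacle to be the sheer number of Hamiltonian cycles: there are $(n-1)!/2$ of them, so a verbatim encoding blows up already around $n=12$ and cannot reach the range the framework targets. The natural remedy is a lazy, counterexample-guided loop: start from the drawability CNF alone (together with the symmetry breaking and the \verb|-lex| option of \cref{sec:all_encoding} to factor out relabellings and reflections), solve it, and for any returned drawing test separately whether it admits a plane Hamiltonian cycle $H$; if it does, add only the single clause forbidding $H$ from being plane and resolve, and if it does not, report a counterexample. Since enlarging a \SAT{} instance can only preserve unsatisfiability, reaching \false{} after adding merely a \emph{subset} of the cycle clauses already certifies the conjecture for that $n$, and the resulting proof can be independently checked with drat-trim~\cite{WetzlerHeuleHunt2014}. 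The genuinely hard part, which no amount of such computation can settle, is the passage from ``true for all tested $n$'' to ``true for all $n$''; resolving \cref{conjecture:rafla} in general would require a structural argument that lies outside the scope of the finite \SAT{} search.
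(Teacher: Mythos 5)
Your proposal takes essentially the same approach as the paper: the paper verifies the conjecture for $n \le 10$ (and for generalized twisted drawings up to $n \le 13$) by building exactly this CNF --- drawability clauses plus, for every Hamiltonian cycle, a clause asserting at least one crossing among its non-adjacent edge pairs, an eager encoding with $(n-1)!$ clauses as described in \cref{app:encodings_planesubstructures} --- and showing unsatisfiability with CaDiCaL, certified by DRAT-trim, while correctly treating the general conjecture as beyond the reach of finite search. Your lazy counterexample-guided refinement is a sensible engineering addition that the paper does not employ, but the core contraposition encoding and the verification logic are identical.
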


Rafla verified the conjecture for $n \le 7$. Later \'Abrego et al.\ \cite{AbregoAFHOORSV2015} enumerated all rotation systems for \mbox{$n \le 9$} and verified Rafla's conjecture. 
The conjecture was recently proven for the subclasses of cylindrical and strongly c-monotone drawings~\cite{aov-2024-tcfhcsdcg}, for convex drawings~\cite{BFROS2024}, and for generalized twisted drawings with an odd number of vertices~\cite{AichholzerGTVW22twisted}.
With our \SAT{} framework, 
we got additional computational results.

\begin{theorem}
\cref{conjecture:rafla} is true for all simple drawings with $n \le 10$ and generalized twisted drawings with $n \leq 13$.
\end{theorem}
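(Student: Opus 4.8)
The plan is to prove this theorem computationally using the SAT framework developed in the preceding sections, exploiting the fact established by \cref{theorem:rotsys_5tuples_characterization} that drawability is captured entirely by the absence of the three forbidden subconfigurations $\obstructionFour$, $\obstructionFiveA$, and $\obstructionFiveB$. The overall strategy is to encode the \emph{negation} of the claim and prove unsatisfiability: we build a CNF whose satisfying assignments correspond exactly to rotation systems of simple drawings of $K_n$ that admit \emph{no} plane Hamiltonian cycle, and then certify with a SAT solver such as CaDiCaL that no such assignment exists for each relevant $n$.

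First I would assemble the base encoding from \cref{sec:all_encoding}: the $X$-variables and $Y$-variables modelling the cyclic permutations (\cref{ssec:preRS}), the $\Theta(n^5)$ clauses of \cref{ssec:RS} forbidding the three obstructions so that solutions correspond precisely to drawable pre-rotation systems, and the crossing variables $C_{abcd}$ from \cref{sec:crossingsencoding} which are well-defined by \cref{observation:basics}\eqref{item:obstructionfour_crossingsdetermined}. I would additionally turn on the natural-labelling symmetry breaking (and, where it pays off, the \verb|-lex| lexicographic-minimality constraints) to shrink the search space by roughly a factor of $2n(n-1)$. For the generalized twisted case I would further add the \verb|-gt| flag, which restricts via the characterization of Aichholzer et al.\ that every $5$-element subconfiguration is isomorphic to $\obstructionconvexFiveA$; because this is a far more restrictive class, it is exactly what lets us push the bound out to $n \le 13$ rather than $n \le 10$.

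The crux is the encoding of ``contains no plane Hamiltonian cycle.'' A Hamiltonian cycle is a cyclic ordering of the $n$ vertices, and it is \emph{plane} precisely when no two of its $n$ edges cross; since crossings are read off the $C$-variables, planarity of a candidate cycle is a purely combinatorial condition. The natural route is to introduce, for a fixed candidate cycle given by edge set $H$, the assertion that at least one pair of its edges crosses, i.e.\ a clause $\bigvee_{\{e,f\}\subseteq H} C_{ef}$ over the non-adjacent edge pairs; conjoining this over \emph{all} $(n-1)!/2$ cyclic orderings forces every Hamiltonian cycle to have a crossing, so that a satisfying assignment is exactly a drawing with no plane Hamiltonian cycle. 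The honest obstacle here is combinatorial explosion: there are $(n-1)!/2$ cyclic orderings, which is already enormous at $n=13$, so a naive enumeration of one long clause per cycle is infeasible. The main work will therefore be a smarter formulation --- for instance auxiliary variables selecting a single cycle together with constraints forcing it to be plane and Hamiltonian, so that the solver searches over \emph{both} the drawing and a purported plane cycle simultaneously, and unsatisfiability certifies that no drawing escapes every plane cycle.

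Once the instance is built, the proof reduces to running the solver: unsatisfiability for each target $n$ establishes the claim, and I would emphasize that the result is independently verifiable, since CaDiCaL emits a DRAT proof that can be checked by drat-trim \cite{WetzlerHeuleHunt2014}, giving a formally certified outcome rather than one resting on trust in the solver. The remaining engineering concern is purely one of scale --- memory and time for the $n=10$ general instance and the $n=13$ generalized-twisted instance --- which is mitigated by the compactness of the encoding, the aggressive symmetry breaking, and the strong structural restriction imposed by the \verb|-gt| flag in the twisted case.
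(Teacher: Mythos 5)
Your overall strategy --- encoding the negation (a drawable rotation system in which every Hamiltonian cycle carries a crossing) and certifying unsatisfiability with CaDiCaL plus a DRAT certificate --- is exactly the paper's approach, and your base encoding (obstruction clauses from \cref{theorem:rotsys_5tuples_characterization}, crossing variables, symmetry breaking, the \texttt{-gt} restriction) matches \cref{sec:all_encoding}. However, your ``smarter formulation'' of the Hamiltonian-cycle constraint is logically broken. The property to refute has the quantifier shape ``there exists a drawing such that \emph{every} Hamiltonian cycle has a crossing.'' If you instead add existential auxiliary variables that select a single cycle and constrain it to be plane and Hamiltonian, the resulting CNF is satisfiable as soon as \emph{some} drawing has \emph{some} plane Hamiltonian cycle (e.g.\ the convex one), and its unsatisfiability would assert that \emph{no} drawing contains \emph{any} plane Hamiltonian cycle --- the opposite of what is needed. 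A universal quantifier over cycles cannot be absorbed into extra existential SAT variables; one needs full expansion, QBF, or a CEGAR-style refinement loop. The paper does the full expansion: one clause per cyclic ordering, $(n-1)!$ clauses in total (\cref{app:encodings_planesubstructures}), which it concedes ``is only suited for small values of~$n$'' --- yet this naive encoding, which you dismiss as infeasible, is precisely what carries the computation through $n=10$ (about 3 CPU days, with a roughly 78GB DRAT certificate).

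The second gap concerns $n=13$ for generalized twisted drawings. The paper does \emph{not} compute this case at all: its SAT computations in the \texttt{-gt} setting stop at $n=12$ (about 3 CPU hours), and the case $n=13$ is obtained for free from the cited result of Aichholzer et al.~\cite{AichholzerGTVW22twisted}, which proves Rafla's conjecture for generalized twisted drawings with an \emph{odd} number of vertices --- and $13$ is odd. Your plan to run the solver directly at $n=13$ would require on the order of $12!\approx 4.8\cdot 10^{8}$ cycle clauses under the only correct encoding available, which is exactly the wall the paper sidesteps by invoking the literature. So to complete the proof as the paper does, you should compute up to $n=12$ in the generalized twisted setting and close $n=13$ by citation rather than by computation.
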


We searched for a counterexample to the conjecture with the parameter \verb|-HC|.
For $n=8,9,10$  the solver CaDiCaL
 took about 25 CPU seconds, 80 CPU minutes, and 3 CPU days, respectively, to show unsatisfiability.
In the setting of generalized twisted drawings, the computations for $n=12$ took about 3 CPU hours and the case $n=13$ is covered in~\cite{AichholzerGTVW22twisted}. 
Furthermore, we exported a CNF file with parameter \verb|-o| and ran the stand-alone executable of CaDiCaL to create a DRAT certificate. 
The certificate for $n=10$ is about 78GB and 
the verification of the unsatisfiability took about 6 CPU days with the proof checking tool DRAT-trim  \cite{WetzlerHeuleHunt2014}.
The resources used for $n \le 9$ are negligible.

\medskip

\goodbreak 

Recently, Aichholzer et al.\ \cite{aov-2024-tcfhcsdcg} provided a strengthening of Rafla's conjecture:
\begin{conjecture}[{
\cite[Conjecture~1.2]{aov-2024-tcfhcsdcg}}]
\label{conj:all_pairs_HP}
    For every simple drawing of $K_n$ and for every choice of two vertices $a,b$, there is a plane Hamiltonian path from $a$ to $b$. 
\end{conjecture}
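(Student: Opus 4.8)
The plan is to treat \cref{conj:all_pairs_HP} exactly as Rafla's \cref{conjecture:rafla} is treated above: since the statement strengthens Rafla's conjecture, a full proof is presumably out of reach, so I would instead verify it for small~$n$ (and push further for generalized twisted drawings) by encoding the contrapositive as a CNF and certifying unsatisfiability. The crucial simplification is that the universal quantifier over the pair~$\{a,b\}$ can be eliminated by relabeling: if any drawable rotation system on~$[n]$ together with some pair admits no plane Hamiltonian path between its two vertices, then an isomorphic copy has the offending pair equal to~$\{1,2\}$. Hence it suffices to search over all drawable rotation systems for one in which \emph{no} plane Hamiltonian path joins the designated vertices~$1$ and~$2$; unsatisfiability of this single instance establishes \cref{conj:all_pairs_HP} for that~$n$. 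For $n\le 9$ the statement can already be confirmed by checking all pairs in the enumeration of \'Abrego et al.~\cite{AbregoAFHOORSV2015}, so the interesting regime is $n\ge 10$.

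First I would reuse the rotation-system clauses of \cref{ssec:RS} and the crossing variables~$C_{ef}$ of \cref{sec:crossingsencoding}, so that the drawable rotation systems are the solutions and the crossing of any two independent edges is read off via \cref{observation:basics}\eqref{item:obstructionfour_crossingsdetermined}. A plane Hamiltonian path from~$1$ to~$2$ is an ordering of the interior vertices~$\{3,\ldots,n\}$ whose $n-1$ edges are pairwise non-crossing, and there are exactly~$(n-2)!$ of them. Analogously to the \verb|-HC| encoding for plane Hamiltonian cycles, I would add, for each such path~$P$, the single clause
\[
\bigvee_{\{e,f\}\subseteq P,\; e\cap f=\emptyset} C_{ef}
\]
expressing that some independent edge pair of~$P$ crosses, i.e.\ that~$P$ is not plane. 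A satisfying assignment of the conjunction of all these clauses with the drawability constraints is precisely a counterexample to \cref{conj:all_pairs_HP}, so proving this formula unsatisfiable with CaDiCaL~\cite{Biere2019} verifies the conjecture for that~$n$, and the unsatisfiability can be independently certified by a DRAT proof checked with drat-trim~\cite{WetzlerHeuleHunt2014}.

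For the subclass variants I would simply intersect this encoding with the class constraints of \cref{sec:subclasses}; in particular the flag~\verb|-gt|, which keeps only the subconfiguration~$\obstructionconvexFiveA$ on every five elements, should as in the cyclic case allow the computation to reach a larger number of vertices for generalized twisted drawings.

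The hard part will be scalability, and I expect it to be worse than for Rafla's \cref{conjecture:rafla}. Pinning the pair to~$\{1,2\}$ only breaks the relabeling symmetry down to the subgroup that preserves this pair, which is smaller than the full symmetry group by a factor of order~$n(n-1)/2$; the effective search space is therefore roughly~$n(n-1)/2$ times larger than in the Hamiltonian-cycle setting, reflecting that a plane path is now demanded for \emph{every} choice of endpoints. The delicate engineering task is to re-derive the lexicographic symmetry breaking so that it respects the fixed pair -- permuting only~$\{3,\ldots,n\}$, optionally swapping~$1\leftrightarrow 2$, and reflecting -- while staying as strong as possible. Combined with the~$(n-2)!$ path clauses and the already substantial DRAT certificate sizes observed at~$n=10$, this symmetry-breaking adaptation is what will ultimately determine how far beyond~$n=10$ (and for which subclasses) the verification can be carried.
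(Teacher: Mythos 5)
Your proposal is correct and takes essentially the same approach as the paper: the paper also proves no general result but verifies the conjecture for small $n$ (simple drawings up to $n=10$, generalized twisted drawings up to $n=13$) by encoding the negation---a drawable rotation system admitting no plane Hamiltonian path for some pair---as a CNF over the rotation-system and crossing variables via its flag \verb|--forbidAllPairsHP|, proving unsatisfiability with CaDiCaL and restricting to subclasses exactly as you describe. Your relabeling argument pinning the offending pair to $\{1,2\}$ is a sound way to eliminate the pair quantifier (and is compatible with the framework's default naturality constraint, since mapping $a\mapsto 1$, $b\mapsto 2$ and labeling the remaining vertices along the rotation of $a$ yields a natural representative); the paper leaves this implementation detail unspecified.
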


Similar to Rafla's conjecture, this conjecture was verified using the database of rotation systems for $n \leq 9$. Moreover, it was proven for strongly c-monotone drawings, cylindrical, and convex drawings, see \cite{aov-2024-tcfhcsdcg,BFROS2024}. 
To test 
\cref{conj:all_pairs_HP} 
with our framework, we use the flag \verb|--forbidAllPairsHP|. The computations for $n = 10$ took about 4 CPU days, and for the restricted setting of generalized twisted drawings about 5 CPU hours for $n =13$. 

\begin{theorem}
\cref{conj:all_pairs_HP} holds for $n \le 10$ for simple drawings of $K_n$ and for $n \leq 13$ for generalized twisted drawings.
\end{theorem}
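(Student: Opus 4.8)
The plan is to reduce the statement to a finite collection of Boolean unsatisfiability problems, one for each value of $n$ (and a separate collection for the subclass of generalized twisted drawings), and to discharge each of them with a \SAT{} solver. As with Rafla's conjecture, I would prove the \emph{contraposition}: a satisfying assignment should encode a simple drawing of $K_n$ together with a designated pair of vertices $a,b$ admitting \emph{no} plane Hamiltonian path from $a$ to $b$, so that unsatisfiability certifies \cref{conj:all_pairs_HP} for that $n$. The base layer of each instance is the rotation-system encoding of \cref{ssec:RS}: by \cref{theorem:rotsys_5tuples_characterization} it suffices to impose the clauses forbidding $\obstructionFour$, $\obstructionFiveA$ and $\obstructionFiveB$ on the $Y$-variables, so that solutions correspond exactly to rotation systems of simple drawings of $K_n$. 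For the second part I would additionally switch on the generalized-twisted characterization of \cref{sec:subclasses} (flag \verb|-gt|), which forbids the four remaining $5$-element subconfiguration types and thereby restricts the search to drawings in which every induced $5$-tuple is isomorphic to $\obstructionconvexFiveA$.

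The core of the encoding is to express, for the designated pair $a,b$, that no plane Hamiltonian path connects them. Here I would exploit that the crossing pairs of a $\obstructionFour$-free rotation system are already determined (\cref{observation:basics}\eqref{item:obstructionfour_crossingsdetermined}) and are exposed through the variables $C_{ef}$ of \cref{sec:crossingsencoding}. A fixed Hamiltonian path $P$ from $a$ to $b$ is plane precisely when none of its non-adjacent pairs of edges cross; hence \emph{forbidding} $P$ amounts to the single clause $\bigvee_{\{e,f\}\subseteq P} C_{ef}$ asserting that at least one such pair crosses. Conjoining one such clause over all $(n-2)!$ Hamiltonian paths between $a$ and $b$ forces every plane Hamiltonian path between the pair to be absent. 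Since any bad (drawing, pair) can be relabelled so that the pair becomes a fixed representative, it suffices to fix $a,b$, which is what the flag \verb|--forbidAllPairsHP| sets up.

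With the instances assembled I would run CaDiCaL on the cases $n=8,9,10$ for general simple drawings and on $n \le 13$ for generalized twisted drawings, the smaller values of $n$ being already settled by the enumeration of \cite{AbregoAFHOORSV2015} for $n\le 9$ and by \cite{AichholzerGTVW22twisted} in the twisted case. Each reported unsatisfiability verdict completes the corresponding case; for added confidence I would export the CNF via \verb|-o| and have DRAT-trim independently check the certificate \cite{WetzlerHeuleHunt2014}.

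The main obstacle I anticipate is scale rather than mathematical subtlety. The number of Hamiltonian paths between a fixed pair grows like $(n-2)!$, so the formulas become enormous ($11!$ forbidding clauses already for $n=13$), while for general drawings the space of rotation systems is itself doubly exponential in $n$; proving unsatisfiability over it is exactly where the solver time is spent. The reason $n$ can be pushed to $13$ only in the generalized-twisted regime is that the \verb|-gt| constraints collapse the admissible drawings to a highly rigid family, so the solver prunes the search dramatically. A secondary point requiring care is the interaction between fixing the designated pair $a,b$ and the relabelling symmetry breaking of \cref{ssec:preRS}: one must ensure that restricting to a single representative pair does not inadvertently exclude a genuine counterexample.
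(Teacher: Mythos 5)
Your proposal takes essentially the same route as the paper: the theorem is established by exactly this computation --- the framework's \verb|--forbidAllPairsHP| flag assembles the rotation-system CNF (with the \verb|-gt| restriction for the generalized twisted case), the negated conjecture is encoded by one crossing-disjunction clause per Hamiltonian path between the designated pair, and CaDiCaL's unsatisfiability verdicts (about 4 CPU days for $n=10$, about 5 CPU hours for generalized twisted $n=13$) prove the statement, with DRAT-trim available for independent certification. Your symmetry concern resolves positively: any rotation system with a designated pair $\{a,b\}$ can be relabelled to a natural one mapping the pair to $\{1,2\}$ (start the cyclic order of the first vertex at $b$), so fixing a single representative pair loses no counterexamples.
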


Another strengthening of Rafla's conjecture was recently proposed by Bergold et al.~\cite{BFROS2024}:
\begin{conjecture}[{
\cite[Conjecture~5.2]{BFROS2024}}]
\label{conj:2n_minus_3}
    Every simple drawing of $K_n$ contains a plane subdrawing with $2n-3$ edges which contains a Hamiltonian cycle. 
\end{conjecture}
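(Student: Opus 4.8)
The plan is to verify \cref{conj:2n_minus_3} for small $n$ in exactly the spirit of the preceding conjectures, namely by using our framework to search for a counterexample and proving unsatisfiability of the resulting CNF. On top of the base encoding of \cref{sec:all_encoding} --- the variables $X$, $Y$ and the crossing variables $C_{abcd}$, together with the obstruction clauses of \cref{theorem:rotsys_5tuples_characterization} that confine solutions to drawable rotation systems --- I would introduce two families of selection variables: a variable $h_{ab}$ for every edge $\{a,b\}$ indicating membership in a Hamiltonian cycle, and a variable $s_{ab}$ indicating membership in the plane subdrawing. The constraints encoding that the drawing \emph{contains} the desired structure are then: the degree conditions $\sum_{b} h_{ab}=2$ for every vertex $a$ together with a subtour-elimination encoding (position or flow variables) forcing a single cycle on all $n$ vertices; the inclusion $h_{ab}\rightarrow s_{ab}$; a cardinality constraint $\sum s_{ab}=2n-3$; and planarity clauses $\neg s_{ab}\vee\neg s_{cd}$ whenever the crossing variable $C_{abcd}$ is true (the crossings being determined by \cref{observation:basics}).

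The conceptual difficulty is that we must refute the \emph{negation} for all drawings: we seek a drawable rotation system for which \emph{no} admissible pair $(h,s)$ exists, which is a universal (co-\NP) condition quantified over all candidate subdrawings. This is where \cref{conj:2n_minus_3} is genuinely harder to handle than \cref{conjecture:rafla}. There, ``no plane Hamiltonian cycle'' decomposes into a single clause $\bigvee_{e,f} C_{ef}$ (over the non-adjacent edge pairs of the cycle) per Hamiltonian cycle, of which there are only $(n-1)!/2$; hence a single monolithic CNF suffices, can be refuted directly, and certified with drat-trim \cite{WetzlerHeuleHunt2014}. For the present conjecture the number of $(2n-3)$-edge subdrawings is exponential, so that single-shot trick collapses and the quantifier alternation must be treated explicitly.

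I therefore propose a two-level, counterexample-guided (CEGAR) loop. The outer SAT instance searches over drawable rotation systems, restricted to natural and lexicographically minimal representatives by the symmetry breaking of our framework. For each returned candidate $\Pi$ we read off its crossing pattern via \cref{observation:basics} and solve the inner instance given by the $(h,s)$-encoding above with the $C$-variables fixed to the crossings of $\Pi$: if it is satisfiable, $\Pi$ is not a counterexample and we add a blocking clause --- ideally a structural, rather than instance-specific, clause that rules out a whole family of similar rotation systems --- and iterate; if it is unsatisfiable, then $\Pi$ is a genuine counterexample and the conjecture fails for that $n$. An outer UNSAT result establishes the conjecture for the current $n$, and the final refutation can again be checked independently with drat-trim.

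The main obstacle is precisely this nested ``for all subdrawings'' quantifier, which forfeits the clean encoding available for \cref{conjecture:rafla}. In practice I expect two further bottlenecks: the subtour-elimination constraints needed to force a \emph{single} Hamiltonian cycle, and keeping the CEGAR loop from effectively enumerating astronomically many rotation systems, which will require both aggressive symmetry breaking and the learning of generalizable blocking clauses rather than instance-specific ones. As a scaling aid, I would first validate the encoding on the subclasses where partial results are already available --- convex drawings, for which \cref{conjecture:rafla} is proven, and generalized twisted drawings --- before attempting the general case.
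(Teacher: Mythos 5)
Your plan is sound in principle, but it diverges from the paper precisely at the point where you claim the single-shot trick ``collapses.'' The paper's verification (which is only for $n \le 9$) uses exactly the eager encoding you dismiss: as described in \cref{app:encodings_planesubstructures}, the negation of \cref{conj:2n_minus_3} is expanded into one clause per edge set $E' \in \binom{E(K_n)}{2n-3}$ containing the edges of some Hamiltonian cycle, each clause asserting at least one crossing among the non-adjacent pairs of $E'$ --- the direct analogue of the $(n-1)!$-clause encoding for \cref{conjecture:rafla}, only with more clauses. This exponential clause count is exactly why the paper stops at $n \le 9$ and remarks that the encoding ``grew too fast to continue,'' so the eager approach does not collapse in the verified range; it merely fails to scale beyond it. Your two-level CEGAR loop is a correct alternative, with one caveat: its soundness hinges on every blocking clause excluding only verified non-counterexamples, so instance-specific ``not this rotation system'' clauses are safe but degenerate into enumerating the roughly $7\times 10^9$ rotation systems of $K_9$, while the right generalizable choice is already implicit in your setup --- when the inner solver returns a witness $(h,s)$, block with $\bigvee_{e,f \in s,\ e \cap f = \emptyset} C_{e,f}$, which is sound because any rotation system falsifying that clause admits $s$ itself as a witness. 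With that choice your procedure becomes a lazy, on-demand generation of exactly the paper's monolithic clause set. The trade-off: the paper's eager CNF gives a single solver run whose refutation is certifiable end to end with drat-trim, at the price of the upfront blowup; your lazy loop avoids generating all clauses and might reach $n=10$ if few witnesses suffice, but certification must additionally audit every generated blocking clause, and the subtour-elimination machinery in your inner instance is overhead the paper's encoding never needs (it quantifies over explicit Hamiltonian cycles rather than encoding cyclicity).
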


As the main result of their article, they prove the conjecture for the subclass of convex drawings. 
We here use our \SAT{} framework with parameter \verb|-HC+| to verify the conjecture for simple drawings up to $n \le 9$ vertices. The computations took about 3 CPU days.

\begin{theorem}
\cref{conj:2n_minus_3} holds for $n \le 9$.
\end{theorem}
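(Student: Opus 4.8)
The plan is to cast \cref{conj:2n_minus_3} as a (non)satisfiability question inside the \SAT{} framework of \cref{sec:all_encoding} and to certify it for each $n\le 9$ by showing the corresponding CNF is unsatisfiable. First I would take the encoding of drawable pre-rotation systems on $[n]$, that is, the $X$- and $Y$-variables together with the obstruction clauses justified by \cref{theorem:rotsys_5tuples_characterization}, so that the satisfying assignments correspond exactly to rotation systems of simple drawings of $K_n$. On top of this I would add the crossing variables $C_{ef}=C_{abcd}$ from \cref{sec:crossingsencoding}, which are fully determined by the rotation system via \cref{observation:basics}\eqref{item:obstructionfour_crossingsdetermined} and record whether two non-adjacent edges $e,f$ cross. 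Following the contraposition scheme, a satisfying assignment should encode a \emph{counterexample} drawing, namely one containing no plane subdrawing with $2n-3$ edges that contains a Hamiltonian cycle, so that unsatisfiability yields the theorem.

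The core of the encoding is to express that the drawing admits no such subdrawing, and here I would exploit that it suffices to forbid candidate subdrawings of one convenient shape. Indeed, a Hamiltonian cycle together with a triangulation of one of its two sides uses exactly $n+(n-3)=2n-3$ edges and contains a Hamiltonian cycle, so any such candidate realized without crossings already witnesses \cref{conj:2n_minus_3} for the drawing at hand; this is a plane maximal outerplanar graph, specified by a cyclic order of the $n$ vertices (the cycle) together with a triangulation of the resulting $n$-gon (the chords). For each candidate $S$ I would add the single clause
\[
  \bigvee_{\{e,f\}\subseteq S,\ e\cap f=\emptyset} C_{ef},
\]
asserting that at least one non-adjacent pair of edges of $S$ crosses, so that $S$ is \emph{not} realized as a plane subdrawing; conjoining these over all candidates forces every candidate to carry a crossing, which is precisely the negation of the conjecture. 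Crucially, since each candidate genuinely has $2n-3$ edges and contains a Hamiltonian cycle, this clause set is \emph{sound} even though it is not exhaustive: if the resulting CNF is unsatisfiable, then every simple drawing realizes at least one candidate as a plane subdrawing, establishing the conjecture. After adding the symmetry-breaking clauses (restricting to natural, and optionally lexicographically minimal, rotation systems) I would hand the instance to CaDiCaL~\cite{Biere2019} and, for independent verification, emit a DRAT certificate to be checked with drat-trim~\cite{WetzlerHeuleHunt2014}.

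The main obstacle is the sheer number of candidate substructures. There are $\tfrac{(n-1)!}{2}$ Hamiltonian cycles and, for each, $C_{n-2}$ triangulations of the $n$-gon (a Catalan number), giving roughly $\tfrac{(n-1)!}{2}\cdot C_{n-2}$ clauses, each of length $\Theta(n^2)$; for $n=9$ this is already about $8.6$ million clauses, which is consistent with the reported runtime of about $3$ CPU days and explains why $n=9$ is the current ceiling, since for $n=10$ the count jumps past $2\cdot 10^8$ clauses and the instance becomes unwieldy. Keeping the formula tractable therefore hinges on aggressive symmetry breaking on the drawing and on generating the candidate clauses compactly; the conceptual steps (a valid rotation system, the crossing variables, and the reduction to maximal outerplanar candidates) are routine, and the real work is controlling this combinatorial blow-up.
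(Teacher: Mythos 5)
Your overall pipeline is exactly the paper's: encode rotation systems via the $X$/$Y$-variables and the obstruction clauses of \cref{theorem:rotsys_5tuples_characterization}, add the crossing variables of \cref{sec:crossingsencoding}, encode the contraposition, and certify unsatisfiability with CaDiCaL and a DRAT certificate (the paper does this with the flag \verb|-HC+|). The deviation is in the one step that matters: which substructures you forbid. The paper's encoding (\cref{app:encodings_planesubstructures}) adds a clause for \emph{every} edge set $E' \in \binom{E(K_n)}{2n-3}$ that contains the edge set of some Hamiltonian cycle, so its CNF is satisfiable if and only if a counterexample to \cref{conj:2n_minus_3} exists. You instead forbid only candidates of one special shape, a Hamiltonian cycle plus a triangulation of the corresponding $n$-gon (maximal outerplanar graphs).

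This is a genuine gap. Your clause set is sound, as you note, but it is \emph{not} ``precisely the negation of the conjecture'': it is the negation of a strictly stronger statement, namely that every simple drawing of $K_n$ contains a plane spanning \emph{maximal outerplanar} subgraph. A plane subdrawing witnessing \cref{conj:2n_minus_3} need not have that shape: its $n-3$ chords may lie on both sides of the Hamiltonian cycle, and then chords that interleave along the cycle are admissible (for instance, cycle $123456$ with chords $13,46$ on one side and $25$ on the other), whereas interleaving chords can never occur in a polygon triangulation. Consequently a satisfying assignment of your CNF is not necessarily a counterexample, and your CNF could be satisfiable for some $n \le 9$ even though the theorem is true; in that event your computation terminates with a model and proves nothing. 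Note also that the paper's reported computation does not rescue you: since your clauses are a subset of the paper's, the implication runs only from \emph{your} UNSAT to the paper's UNSAT, not conversely. To close the gap you must either use the exhaustive candidate set as the paper does, or prove separately that every plane $(2n-3)$-edge Hamiltonian witness can be replaced by a plane witness of your restricted shape, or actually run your restricted instance and obtain UNSAT --- which is an empirical claim your write-up does not establish. (A smaller side effect: your clause-count comparison with the paper's $3$ CPU days is moot, because the paper's instance, built over all supersets $E'$, is far larger than your $\tfrac{(n-1)!}{2}\cdot C_{n-2}$ count.)
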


While for the previous two conjectures, we continued the search for subclasses and larger~$n$, the number of clauses in this encoding (cf.\ \cref{app:encodings_planesubstructures}) 
grew too fast to continue. 

\medskip

As shown in~\cite{BFROS2024}, there are several variant of prescribing structures for plane Hamiltonian subgraphs. However for simple drawings, we cannot even assign an edge to be contained in a Hamiltonian cycle. 
A weakening of this question is prescribing edges such that there is a Hamiltonian cycle which does not cross these edges (but not necessarily contain them).
A natural question is to
prescribe plane matchings  
and
ask for a plane Hamiltonian cycle which together with the
matching builds a plane Hamiltonian subgraph, 
i.e., the edges of the matching are
not crossed by the Hamiltonian cycle (but possibly contained).
For the setting of a geometric drawings of $K_n$
Hoffmann and T\'oth
\cite{HoffmannToth2003} showed that
for every plane perfect matching~$M$ there
exists a plane Hamiltonian cycle that 
does not cross any edge from~$M$. See also \cite{Hoffmannphd05} for a version with not necessarily perfect matchings. 

While the statement does not generalize to the simple drawings and not even to strongly c-monotone
(see \cref{fig:HTP_nonconvex_counter}),
our computations suggest that
it is true for convex drawings. 
Hence, we dare another strengthening of Rafla's conjecture:
	
\begin{conjecture}
\label{conjecture:hoffmanntoth_convex}
    For every plane matching~$M$
    in a convex drawing of~$K_n$ 
    there exists a plane Hamiltonian cycle 
    that does not cross any edge from~$M$.
\end{conjecture}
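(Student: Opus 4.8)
The plan is to prove \cref{conjecture:hoffmanntoth_convex} by induction on the number of vertices $n$, using the already-established validity of Rafla's conjecture for convex drawings~\cite{BFROS2024} as the engine for the base case. First I would reformulate the goal: a plane Hamiltonian cycle $H$ fails to cross any edge of $M$ precisely when $H \cup M$ is again plane, so we are looking for a plane Hamiltonian cycle that is \emph{compatible} with the prescribed plane matching. The topological analogue of the Hoffmann--T\'oth argument~\cite{HoffmannToth2003} suggests two reduction moves that shrink the instance while staying inside the class of convex drawings: deleting an \emph{unmatched} vertex that sits in an extremal position, and \emph{contracting} a matching edge $\{a,b\}$ into a single vertex. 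In both cases one recurses on the smaller convex drawing, obtains a compatible plane Hamiltonian cycle there, and then lifts it back -- by subdividing an appropriate cycle edge to reinsert the deleted vertex, or by re-expanding the merged vertex into $ab$ and inserting $ab$ itself into the cycle, so that the two cycle edges formerly incident to the merged vertex are reattached to $a$ and to $b$ without crossing $M$.

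The combinatorial heart is to make these moves precise purely at the level of rotation systems, since the whole statement can be phrased in the drawable, $\obstructionFour$-free, $(\obstructionconvexFiveA,\obstructionconvexFiveB)$-free world and need not refer to an actual drawing. For the deletion of an unmatched extremal vertex $v$ the reduced system is the induced subconfiguration $\Pi|_{V \setminus \{v\}}$, which is automatically convex because a forbidden-subconfiguration characterization is hereditary; drawability is inherited likewise. For the contraction I would define the rotation of the merged vertex $c$ by concatenating the rotations of $a$ and $b$ along the edge $ab$ (well defined because $ab$, being a matching edge, is crossed by no other edge of $M$) and then argue that the result is still convex. In both moves the delicate point is the \emph{lifting}: one must show the newly created cycle edges can be chosen to keep $H$ plane and to avoid $M$, and here I would exploit that an extremal vertex sees the rest of the drawing within a single convex sector, which tightly constrains where the cycle may be reattached.

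I expect the main obstacle to be exactly the claim that contraction preserves convexity. Fusing the rotations of $a$ and $b$ can a priori create a forbidden configuration $\obstructionconvexFiveA$ or $\obstructionconvexFiveB$ on five of the remaining vertices, so one needs a genuine structural argument -- most plausibly a case analysis showing that any such forbidden five-tuple in the contracted system already forces a forbidden configuration (of size $5$ or $6$, hence also an h-convex obstruction $\obstructionhconvexSix$ in the worst case) involving $a$ or $b$ in the original drawing, contradicting convexity. A secondary obstacle is the lack of a drawing-independent notion of ``convex hull vertex'' to peel off: one must first prove, entirely in terms of the rotation system, that a convex drawing always admits either an unmatched extremal vertex or an extremal matching edge fit for contraction -- the topological analogue of the otherwise trivial fact that a point set has a vertex on its convex hull.

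As a fallback, should the contraction step resist, I would instead induct on $|M|$: start from any plane Hamiltonian cycle (guaranteed by~\cite{BFROS2024}) and, for each matching edge still crossed by the current cycle, apply a $2$-opt-style reroute that removes the crossing, reconnecting the two resulting arcs via the all-pairs plane Hamiltonian path result for convex drawings~\cite{aov-2024-tcfhcsdcg,BFROS2024}. The risk -- and the reason this is only a fallback -- is that repairing one crossing may introduce a new crossing with a different matching edge, so one would need a monovariant (say the number of $(H,M)$-crossings together with a tie-breaking order on $M$) that provably decreases, which again appears to hinge on the convexity obstruction characterization to control the local rerouting.
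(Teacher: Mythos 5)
The statement you are addressing is an open conjecture: the paper does not prove it for general $n$, but only \emph{verifies} it computationally for $n \le 11$, by encoding the negation (a convex drawing of $K_n$ with a plane matching $M$ such that every plane Hamiltonian cycle crosses some edge of $M$) as a CNF with symmetry-breaking clauses (\cref{app:encodings_matchings}) and letting a SAT solver establish unsatisfiability. So your proposal is not comparable to the paper's argument --- it is an attempt at a strictly stronger, genuinely open result, and as written it is a research plan rather than a proof. The gaps you yourself flag are exactly the load-bearing steps, and none of them is routine. First, in the combinatorial setting in which the conjecture lives, all edges of $K_n$ are already drawn and their crossings are determined by the rotation system; you never get to ``route'' or ``choose where to reattach'' an edge. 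Hence the lifting steps --- subdividing a cycle edge to reinsert a deleted vertex, or re-expanding a contracted matching edge --- require showing that the already-existing edges $\{x,v\},\{v,y\}$ (respectively $\{x,a\},\{a,b\},\{b,y\}$) avoid $M$ and the rest of the cycle, which is a statement of the same type and difficulty as the conjecture itself. Hoffmann and T\'oth~\cite{HoffmannToth2003} can control this because geometric convex position gives explicit freedom in rerouting, and the paper's counterexample for strongly c-monotone drawings (\cref{fig:HTP_nonconvex_counter}) shows that any such control must use convexity essentially, not just heredity of the forbidden configurations. Second, the contraction move is not actually defined: there is no argument that fusing the rotations of $a$ and $b$ yields a drawable (i.e.\ $\obstructionFour$-, $\obstructionFiveA$-, $\obstructionFiveB$-free), let alone convex, rotation system, and no rotation-system analogue of a ``hull vertex'' or ``extremal matching edge'' is available for convex drawings, so your induction has neither a well-defined step nor a guaranteed place to apply it.

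Your fallback (2-opt repair of an initial plane Hamiltonian cycle from~\cite{BFROS2024}, reconnecting arcs via the all-pairs plane Hamiltonian path result of~\cite{aov-2024-tcfhcsdcg,BFROS2024}) has the same status: without a monovariant that provably decreases, the rerouting may cycle, and no such monovariant is known --- if one were, it would essentially settle the conjecture, which remains open. In short, nothing in your plan is demonstrably wrong, but every key claim (contraction preserves convexity, existence of an extremal structure, crossing-free lifting, decreasing monovariant) is left unproven, and these are precisely where the difficulty sits. The result that actually matches the paper is finite verification: encode the negation of the statement over the rotation-system variables, restrict to convex drawings by forbidding $\obstructionconvexFiveA$ and $\obstructionconvexFiveB$ as subconfigurations, break the symmetries of the prescribed matching as in \cref{app:encodings_matchings}, and prove unsatisfiability for small~$n$ with a SAT solver.
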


\begin{figure}[htb]
    \centering
    \includegraphics{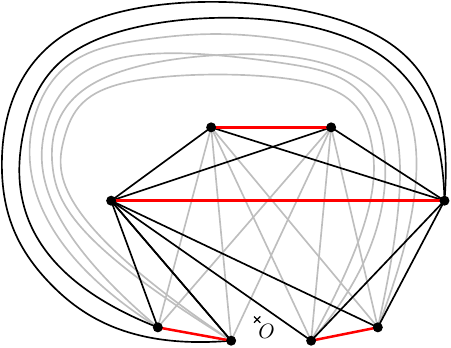}
    \caption{A perfect matching (red)
    in a non-convex and strongly c-monotone drawing of~$K_8$, 
    which does not contain a plane Hamiltonian cycle not crossing the matching edges. Edges which cross matching edges are marked in grey. 
    }
    \label{fig:HTP_nonconvex_counter}
\end{figure}

Using our framework and the parameter \verb|-HT+ k|, where $k \in \{1,\ldots,\lfloor n/2 \rfloor \}$ denotes the size of the matching, it took about 12 CPU days in total to verify the conjecture for
$n \leq 11$. 
The details of the encoding are given in \cref{app:encodings_matchings}.

\begin{theorem}
\cref{conjecture:hoffmanntoth_convex} is true for $n \le 11$.
\end{theorem}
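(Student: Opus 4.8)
The plan is to follow the framework's standard methodology and reduce the statement to the unsatisfiability of a family of CNF formulas, one for each $n \le 11$ (and, within each $n$, one for each admissible matching size $k$). A counterexample to \cref{conjecture:hoffmanntoth_convex} would consist of a convex drawing $\calD$ of $K_n$ together with a plane matching $M$ such that \emph{every} plane Hamiltonian cycle of $\calD$ crosses at least one edge of $M$. I would encode the existence of such a configuration and let CaDiCaL certify that no solution exists; the conjecture then follows for all $n$ in range by contraposition. As for the other theorems, the resulting unsatisfiability can optionally be double-checked by extracting a DRAT certificate.

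For the encoding I would start from the base rotation-system encoding of \cref{sec:all_encoding} together with the crossing variables $C_{ef}$ of \cref{sec:crossingsencoding}, and restrict the search space to convex drawings by forbidding $\obstructionconvexFiveA$ and $\obstructionconvexFiveB$ as in \cref{sec:subclasses}. This restriction is not merely demanded by the conjecture; it also shrinks the number of admissible drawings drastically, which is what makes reaching $n=11$ realistic. To model the matching I would add a selector variable $M_e$ per edge $e$ and impose that the chosen edges form a plane matching: clauses $\neg M_e \vee \neg M_f$ whenever $e,f$ share a vertex, clauses $\neg M_e \vee \neg M_f \vee \neg C_{ef}$ for disjoint $e,f$ to forbid crossings within $M$, and a cardinality constraint fixing $|M|=k$. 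The delicate part is expressing that no plane Hamiltonian cycle avoids $M$: since \SAT{} captures existential statements but a universal quantifier over Hamiltonian cycles does not collapse to a single clause, I would enumerate all $(n-1)!/2$ cyclic vertex orders $H$ and, for each, add the clause $\bigvee_{e,f \in E(H)} C_{ef} \vee \bigvee_{e \in E(H),\, g} Z_{eg}$, where the auxiliary variable is defined by $Z_{eg} \leftrightarrow (C_{eg} \wedge M_g)$ via the usual Tseitin clauses. Each such clause rules out $H$ as a matching-avoiding plane Hamiltonian cycle (it forces $H$ to be self-crossing or to cross some selected edge), so their conjunction is exactly the negated conclusion.

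The main obstacle is the factorial blow-up: one clause per Hamiltonian cycle yields roughly $(n-1)!/2$ clauses, already about $1.8$ million for $n=11$, which is why the search cannot be pushed much further and why the total runtime reaches several CPU days. To keep the instances tractable I would rely on the default restriction to natural rotation systems and, where it pays off, lexicographic symmetry breaking, and I would iterate the matching size $k$ from $1$ to $\lfloor n/2 \rfloor$ rather than fixing a single value. Here one can exploit that a counterexample matching stays a counterexample when extended within the plane matchings, so the binding instances are the maximal plane matchings; since these may have different sizes once crossings block extensions, sweeping over all $k$ is the safe choice. The remaining work is purely computational: generate each CNF with the \texttt{-HT+ k} parameter, run CaDiCaL to unsatisfiability for every $n \le 11$ and every $k$, and conclude. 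I do not expect conceptual difficulty beyond controlling formula size; the soundness of the reduction rests on \cref{observation:basics} (crossings are determined by the rotation system) and on the convexity characterization via $\obstructionconvexFiveA$ and $\obstructionconvexFiveB$.
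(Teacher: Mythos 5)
Your proposal is correct and follows the same overall route as the paper: for each $n \le 11$ and each matching size $k \in \{1,\ldots,\lfloor n/2 \rfloor\}$, build a CNF whose solutions are counterexamples --- convexity enforced by forbidding $\obstructionconvexFiveA$ and $\obstructionconvexFiveB$, one clause per Hamiltonian cycle asserting that it is self-crossing or crosses the matching --- and let CaDiCaL certify unsatisfiability (this is the paper's parameter \texttt{-HT+ k}, about 12 CPU days in total). The genuine difference is the treatment of the matching. The paper (\cref{app:encodings_matchings}) fixes $M=\{\{1,2\},\ldots,\{2k-1,2k\}\}$ once and for all, which is without loss of generality by relabeling invariance; every Hamiltonian-cycle clause is then a plain disjunction of crossing variables $C_{e,f}$ with $f$ ranging over the $k$ prescribed matching edges, with no selector or Tseitin variables at all. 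The price is that fixing the matching vertices destroys the natural-rotation-system WLOG for $k\ge 2$, so the paper instead adds hand-tailored symmetry-breaking constraints on the rotation of vertex~$1$ (ordering the matching edges and the remaining vertices). You keep $M$ symbolic via selector variables $M_e$, a cardinality constraint, and auxiliaries $Z_{eg}\leftrightarrow (C_{eg}\wedge M_g)$; this is equally sound, and it legitimately lets you retain the default natural/lex symmetry breaking, since the counterexample property is isomorphism-invariant when the matching is relabeled along with the drawing --- precisely the freedom the paper gives up. What you pay is instance size: each of the roughly $(n-1)!/2$ cycle clauses now carries a disjunction over all pairs $(e,g)$ with $g$ an arbitrary disjoint edge rather than one of $k$ fixed ones, about $n\binom{n-2}{2}$ literals instead of $n k$, i.e.\ several times longer at $n=11$, so whether your instances stay solvable within comparable running times is a practical rather than a logical concern. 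Your side remarks are also fine: sweeping over all $k$ is exactly what the paper does, and your observation that counterexample matchings are preserved under extension within plane matchings (so maximal matchings are the binding case) is correct, though you rightly do not rely on it.
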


\subsection{Empty $k$-Cycles}
\label{ssec:kyccles}

Another interesting strengthening of Rafla's conjecture is formulated in terms of empty $k$-cycles.
An \emph{empty $k$-cycle} is a plane cycle through $k$ vertices
of the drawing such that all remaining vertices lie on one common side.

\begin{conjecture}[{
\cite[Conjecture~5.1]{BFROS2024}}]
\label{conj:kcycles}
Every simple drawing of $K_n$ contains 
an empty $k$-cycle for every $k=3,\ldots,n$.
\end{conjecture}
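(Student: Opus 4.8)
The plan is to treat \cref{conj:kcycles} exactly as the paper handles the sibling strengthenings of Rafla's conjecture: since a general proof appears out of reach, I would verify it computationally for small $n$ with the \SAT{} framework by encoding its negation and proving unsatisfiability. Two extreme cases are already understood and serve as sanity checks. For $k=n$ the set of remaining vertices is empty, so the emptiness condition is vacuous and the statement reduces exactly to \cref{conjecture:rafla} on plane Hamiltonian cycles; for $k=3$ every triangle is automatically plane and one merely asks for a triangle with all other vertices on one common side, i.e.\ an empty triangle. The conjecture thus interpolates between these regimes, and I would not expect a clean monotone reduction between consecutive values of $k$ — shortcutting an empty $(k{+}1)$-cycle need not preserve emptiness — so I would handle each $k$ separately.

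For a fixed $n$, the statement ``every drawing contains an empty $k$-cycle for all $k$'' is equivalent to the conjunction over $k\in\{3,\ldots,n\}$ of ``every simple drawing of $K_n$ contains an empty $k$-cycle'', since the universal quantifier over drawings commutes with the one over $k$. I would therefore build one \SAT{} instance per value of $k$. Each instance starts from the rotation-system encoding of \cref{sec:all_encoding}: the $X$- and $Y$-variables with the drawability clauses enforcing \cref{theorem:rotsys_5tuples_characterization}, the crossing variables $C_{abcd}=C_{ef}$ of \cref{sec:crossingsencoding}, and the symmetry-breaking constraints restricting to natural (and, if affordable, lexicographically minimal) representatives. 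On top of this I would assert that the drawing has \emph{no} empty $k$-cycle; if the resulting formula is unsatisfiable for every $k$, the conjecture holds for that $n$, and the UNSAT result can be independently certified with a DRAT proof as the authors do elsewhere.

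The core of the encoding is the predicate ``$\gamma$ is a plane empty $k$-cycle''. Since the crossing pattern is fully determined by the rotation system (\cref{observation:basics}\eqref{item:obstructionfour_crossingsdetermined}), I can express it purely in terms of the $C$-variables. Iterating over every candidate cycle $\gamma=(v_1,\ldots,v_k)$ on a $k$-subset, planarity is the conjunction $\bigwedge \neg C_{ef}$ over all pairs of non-adjacent cycle edges $e,f$, and emptiness is captured by a Jordan-curve parity argument: two vertices outside $\gamma$ lie on a common side exactly when the edge joining them crosses $\gamma$ an even number of times. Fixing one reference outside vertex, I would add Tseitin auxiliary variables computing, for each other outside vertex, the parity of its crossings with the $k$ edges of $\gamma$, and then require all these parities to agree. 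For each candidate $\gamma$ a single clause forbids it from being simultaneously plane and empty, and the conjunction of these clauses over all candidate cycles encodes ``no empty $k$-cycle''.

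The main obstacle is the combinatorial blow-up rather than any conceptual difficulty. There are $\binom{n}{k}\frac{(k-1)!}{2}$ candidate cycles, and each contributes a parity gadget of size $O(nk)$, so the clause count grows rapidly with $n$ — precisely the phenomenon the authors report as the reason they could not push \cref{conj:2n_minus_3} past $n=9$. I expect this to cap the reachable $n$, and careful symmetry breaking, together with enumerating candidate cycles only up to the automorphisms of the search, would be essential to make even modest $n$ tractable. A secondary point to verify is the correctness of the parity-based side test, which is only meaningful once $\gamma$ is a simple closed curve; bundling planarity and emptiness into the same per-cycle clause (non-plane \emph{or} non-empty) sidesteps this cleanly, so that the emptiness literals are only constraining when the planarity literals are already satisfied.
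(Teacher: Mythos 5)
Your overall strategy coincides with the paper's treatment of \cref{conj:kcycles}: the statement is an open conjecture, and the paper's contribution is a finite verification (up to $n=10$) obtained by building, for each pair $(n,k)$, a CNF on top of the rotation-system axioms that forbids every candidate cycle from being a plane empty $k$-cycle, and proving unsatisfiability. The paper's encoding (\cref{app:encodings_kcycles}) has exactly your per-cycle clause shape, ``some two cycle edges cross, or some two outside vertices are separated,'' with separation read off from crossing parity with the cycle edges, justified by the same Jordan-curve argument and the same remark that simplicity plus planarity of the cycle makes the parity test valid. The only real difference is implementation: the paper keeps a separation variable $W_{\pi,p,q}$ for \emph{every} pair of outside vertices and assigns it by enumerating all $2^k$ crossing patterns of the edge $pq$ with $E_\pi$, whereas you fix a reference vertex and build $O(nk)$-size Tseitin parity chains per cycle; your variant is leaner.

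However, one step as written is unsound. You require ``all these parities to agree,'' where the parities are those of the edges from the reference vertex $r$ to the \emph{other} outside vertices. Agreement has two branches: all parities even (genuine emptiness), and all parities odd, which means every other outside vertex lies on the side opposite to $r$ --- and since $r$ is itself a remaining vertex, such a cycle is \emph{not} empty. Your per-cycle clause therefore forbids strictly more than ``plane and empty,'' so every satisfying assignment of your formula satisfies the correct formula but not conversely: an UNSAT answer from your encoding does not certify \cref{conj:kcycles}, because a counterexample drawing in which some plane $k$-cycle isolates exactly the chosen reference vertex would be wrongly excluded. The degeneration is starkest at $k=n-2$: there is only one non-reference outside vertex, ``agreement'' among a single parity is vacuous, so your clause reduces to ``this $(n-2)$-cycle is not plane,'' and UNSAT would only certify the much weaker statement that every drawing contains a plane $(n-2)$-cycle. (Satisfying assignments are still genuine counterexamples, so only the direction you actually need fails.) The fix is immediate and restores equivalence with the paper's condition: require all parities to be \emph{even}, i.e., include the reference vertex, whose parity is trivially zero, in the agreement, giving the clause
$\bigvee_{e,f \in E_\pi} C_{e,f} \;\vee\; \bigvee_{p \neq r} \mathrm{par}_p$
per candidate cycle.
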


The case $k=n$ 
coincides with Rafla's conjecture as empty $k$-cycles are precisely plane Hamiltonian cycles.
Empty 3-cycles coincide with empty triangles
and therefore the case $k=3$ is shown by Harborth~\cite{Harborth98}. 
Bergold et al.~\cite{BFROS2024}
showed that the conjecture is true 
for the subclass of convex drawings.
We used our SAT framework with parameter \verb|--emptycycles k|, where $k$ denotes the length of the empty cycle,
to verify the conjecture for up to $n = 10$. 
Details of the encoding are deferred to \cref{app:encodings_kcycles}.
\cref{tab:emptykcycles} shows the computing times for different parameters of $n$ and~$k$.
Note that the existence of empty $n$-cycles in all simple drawings of~$K_n$ imply the existence empty $n$-cycles in simple  drawings of~$K_{n+1}$. This is also reflected by computational data as the computing times for $k=n-1$ are comparably small. 

\begin{theorem}
\cref{conj:kcycles} holds for $n \leq 10$ for simple drawings. 
\end{theorem}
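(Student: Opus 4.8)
The plan is to prove \cref{conj:kcycles} for $n \le 10$ by the same \SAT{}-based contraposition used for Rafla's conjecture: we encode an arbitrary simple drawing of $K_n$ by its rotation system, forbid the existence of \emph{any} empty $k$-cycle, and let the solver prove unsatisfiability for each pair $(n,k)$ with $3 \le k \le n \le 10$. The starting point is the encoding of \cref{ssec:RS}: the $Y$-variables describe the rotation around each vertex, and the clauses forbidding the obstructions $\obstructionFour$, $\obstructionFiveA$, $\obstructionFiveB$ guarantee, by \cref{theorem:rotsys_5tuples_characterization}, that every solution is a drawable pre-rotation system, i.e.\ the rotation system of a genuine simple drawing. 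On top of this we reuse the crossing variables $C_{abcd}$ of \cref{sec:crossingsencoding}, which by \cref{observation:basics}\eqref{item:obstructionfour_crossingsdetermined} are fully determined by the rotation system. If the resulting formula is unsatisfiable, then no drawing avoids all empty $k$-cycles, which is exactly the statement of \cref{conj:kcycles} for that $(n,k)$.

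The core of the argument is to turn ``contains no empty $k$-cycle'' into clauses. Since $k$ and $n$ are fixed small constants, we may enumerate all $\Theta(n^k)$ candidate cycles explicitly: a candidate is a choice of $k$ vertices together with a cyclic order $v_1,\dots,v_k$. For each candidate we must express that it fails to be an empty $k$-cycle, i.e.\ that it is not a plane cycle, or that its removal does not leave all remaining vertices on one common side. Planarity of the candidate is a conjunction of literals $\neg C_{v_i v_{i+1} v_j v_{j+1}}$ over all non-adjacent cycle-edge pairs, directly available from the crossing variables. The ``one common side'' condition is where the genuine work lies and is the step I expect to be the main obstacle.

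A plane $k$-cycle is a simple closed curve, so it splits the sphere into two regions, and any two non-cycle vertices $w,w'$ lie on the same side precisely when the edge $ww'$ crosses the cycle an even number of times; in a simple drawing $ww'$ meets each cycle edge at most once, so this parity equals $\bigoplus_i C_{w w' v_i v_{i+1}}$ and is again a function of the rotation system. Hence ``all remaining vertices on one side'' is equivalent to ``all pairs $w,w'$ lie on the same side''. I would introduce a side bit for each remaining vertex relative to the candidate, tie these bits together by Tseitin-style parity constraints built from the crossing variables, and then, for each candidate, add one clause that is the disjunction of the planarity-failure literals with a literal expressing that both sides are occupied. Summing over all candidates gives a CNF of size polynomial in $n$ (for fixed $k$) whose solutions are exactly the simple drawings of $K_n$ that contain no empty $k$-cycle.

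Finally, I would run the pipeline for every $n \le 10$ and every $3 \le k \le n$, activated by the flag \verb|--emptycycles k|, combine it with the default restriction to natural rotation systems (and optionally \verb|-lex|) to shrink the search space, and certify unsatisfiability with \texttt{drat-trim} as for the earlier theorems. Two sanity checks come for free: $k=3$ is Harborth's theorem on empty triangles and $k=n$ is Rafla's conjecture, both already established, and the reduction ``an empty $(n-1)$-cycle of $K_{n-1}$ extends to an empty $(n-1)$-cycle of $K_n$ by placing the deleted vertex on whichever side of the curve it falls'' explains why the $k=n-1$ instances are cheap (cf.\ \cref{tab:emptykcycles}). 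The remaining risk is purely computational: the number of candidate cycles and parity auxiliaries grows like $n^k$, so the binding constraint for the largest instances (moderate $k$ with $n=10$) is solver time rather than any conceptual difficulty.
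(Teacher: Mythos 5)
Your proposal matches the paper's proof: the same contraposition via the rotation-system encoding with crossing variables, per-candidate clauses forbidding each $k$-cycle from being plane and empty, and the same Jordan-curve parity argument (the paper's $W_{\pi,p,q}$ variables encode exactly the parity $\bigoplus_i C_{ww'v_iv_{i+1}}$ you describe, cf.\ \cref{app:encodings_kcycles}), followed by solver runs for all $3 \le k \le n \le 10$. Your per-vertex side bits with Tseitin parity constraints are only an implementation-level variant of the paper's per-pair $W$ variables with $2^k$-case enumeration, not a different route.
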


\begin{table}[htb]
\centering
\small
\def\arraystretch{1.2}
\begin{tabular}{c|rrrrrrrrr}
n/k	&3	&4	&5	&6	&7	&8	&9	&10	
\\
\hline
6	&$<$0.1	&$<$0.1	&$<$0.1	&$<$0.1	&	&	&	&	&	\\
7	&0.2	&1.0	&0.6	&0.1	&1.1	&	&	&	&	\\
8	&1.2	&10.3	&12.1	&15.0	&2.1	&21.4	&	&	&	\\
9	&3.7	&46.0	&337.0	&1248.7	&2732.7	&30.1	&1947.8	&	&	\\
10	&17.3	&467.5	&5804.5	&66600.3	
& 414212.9
&618399.9
&2968.2	&280384.4	&	\\
\end{tabular}

\caption{Computing time  
 for 
 empty $k$-cycles in simple drawings of~$K_n$ in CPU seconds.
}
\label{tab:emptykcycles}
\end{table}

\subsection{Uncrossed Edges}
\label{sec:uncrossededges}

Next we focus on edges which
are not crossed by any other edge.  In 1964,
Ringel~\cite{Ringel1964} proved that in every simple drawing of $K_n$ there
are at most $2n-2$ edges without a crossing. Later, Harborth and Mengersen
\cite{HarborthMengersen1974} studied the minimal number of uncrossed edges. 
They
showed that every simple drawing of $K_n$ with $n\le 7$ contains an uncrossed
edge and constructed simple drawings for $n \geq 8$ such that every edge is crossed.
The constructed drawings are strongly c-monotone but not convex. 
Moreover, they conjectured that crossing-maximal drawings, i.e., drawings with $\binom{n}{4}$ crossings, behave differently. 
\begin{conjecture}[
\cite{HarborthMengersen1992}]
    \label{conj:uncrossededgecrmax}
    Every crossing-maximal simple drawing of $K_n$ contains an \mbox{uncrossed} edge. 
\end{conjecture}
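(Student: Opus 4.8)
The plan is to follow the SAT-based methodology of this paper and verify the conjecture for small $n$ by searching for a counterexample and proving unsatisfiability. Concretely, I would encode the negation of the statement, namely a crossing-maximal simple drawing of $K_n$ in which every edge is crossed, and then argue that the absence of such a drawing (UNSAT) establishes the conjecture for that value of $n$.

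First I would take the base encoding of rotation systems from \cref{ssec:RS}, which by \cref{theorem:rotsys_5tuples_characterization} forces the satisfying assignments to be exactly the drawable pre-rotation systems, and augment it with the crossing variables $C_{abcd}$ from \cref{sec:crossingsencoding}. A simple drawing of $K_n$ has at most $\binom{n}{4}$ crossings because each four-element subset carries at most one crossing (cf.\ \cref{observation:basics}); hence crossing-maximality is equivalent to demanding that \emph{every} four-element subset carries a crossing. This is captured, for every four distinct vertices $a,b,c,d$, by the single clause $C_{abcd} \vee C_{acbd} \vee C_{adbc}$, asserting that at least one of the three possible crossings of the induced $K_4$ is present.

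Next I would encode the condition that no edge is uncrossed: for every edge $\{a,b\}$ the edge must be crossed by some other edge, which is the clause $\bigvee C_{abcd}$ taken over all pairs $\{c,d\}$ disjoint from $\{a,b\}$. Adding the symmetry breaking described earlier (natural rotation systems, and optionally lexicographic minimality), I would hand the resulting CNF to a solver such as CaDiCaL. If the instance is unsatisfiable for a given $n$, then no crossing-maximal drawing of $K_n$ with all edges crossed exists and the conjecture holds for that $n$; as elsewhere in the paper, unsatisfiability can be independently certified via a DRAT proof checked with DRAT-trim. I would expect the crossing-maximality clauses to sharply restrict the feasible rotation systems, making the UNSAT instances tractable for somewhat larger $n$ than the plane-substructure problems above.

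The main obstacle is that this route settles only finitely many cases. The number of rotation systems grows as $2^{\Theta(n^4)}$, so the instances become intractable long before the conjecture is resolved in general, and a SAT certificate for each fixed $n$ says nothing about the infinitely many remaining values. A genuine proof for all $n$ would instead have to exploit the rigid combinatorial structure forced by crossing-maximality, where every induced $K_4$ is the crossing drawing, to argue directly that some edge must remain uncrossed. A naive double-counting of crossings over edges does not suffice, since in a crossing-maximal drawing the average number of crossings per edge is large, so the hard part is leveraging the local rotation structure rather than mere crossing counts to locate a guaranteed uncrossed edge.
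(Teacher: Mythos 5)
Your proposal takes essentially the same approach as the paper: since the statement is a conjecture, the paper also only verifies it for $n \leq 16$ by running its SAT framework with the flags \texttt{-aec} (all edges crossed) and \texttt{-crmax} (crossing-maximal), which correspond exactly to your clauses forcing a crossing in every induced $K_4$ and a crossing on every edge, with unsatisfiability of the resulting CNF establishing each finite case. Your closing caveat matches the paper's position as well -- the SAT route settles only finitely many values of $n$, and the conjecture itself remains open.
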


We used the \SAT{} framework with flags \verb|-aec| (all edges crossed) and \verb|-crmax| (crossing-maximal), we 
confirmed to verify \cref{conj:uncrossededgecrmax} for up to $n=16$. The computations for $n=14,15,16$ took about 3, 8, and 35 CPU hours, respectively.

\begin{theorem}
    \cref{conj:uncrossededgecrmax} is true for $n \leq 16$
\end{theorem}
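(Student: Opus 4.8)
The plan is to recast the conjecture as a single unsatisfiability question for each fixed~$n$ and to discharge it with the \SAT{} framework from \cref{sec:all_encoding,sec:crossingsencoding}. The key observation is that crossing-maximality is a purely combinatorial property of the rotation system: since every induced $K_4$ in a simple drawing carries at most one crossing and each crossing is witnessed by a unique $4$-element subset (\cref{observation:basics}\eqref{item:obstructionfour_crossingsdetermined}), the total number of crossings is at most $\binom{n}{4}$, with equality precisely when every $4$-subset induces a crossing. Hence a drawing is crossing-maximal if and only if for each $4$-subset $\{a,b,c,d\}$ one of the three pairs of opposite edges crosses. Crossing-maximal drawings do exist for every~$n$ -- for instance the straight-line drawing in convex position realises exactly $\binom{n}{4}$ crossings -- so the statement is non-vacuous; the content of the conjecture is that among all such drawings no single edge can escape being the crossed edge in \emph{every} incident $4$-subset.

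First I would take the base encoding of drawable pre-rotation systems on $[n]$ from \cref{ssec:preRS,ssec:RS}, namely the $X$- and $Y$-variables together with the clauses forbidding $\obstructionFour$, $\obstructionFiveA$ and $\obstructionFiveB$ that \cref{theorem:rotsys_5tuples_characterization} certifies to be equivalent to drawability, and the crossing variables $C_{abcd}$ of \cref{sec:crossingsencoding}. For crossing-maximality (flag \verb|-crmax|) I would add, for every $4$-subset $\{a,b,c,d\}$, the clause $C_{abcd}\vee C_{acbd}\vee C_{adbc}$, forcing the induced $K_4$ to be of the crossing type rather than the plane one. To encode the negation of the desired conclusion (flag \verb|-aec|, ``all edges crossed'') I would add, for every edge $\{a,b\}$, the clause $\bigvee_{\{c,d\}\cap\{a,b\}=\emptyset} C_{abcd}$, forcing each edge to be crossed by at least one other edge. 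A satisfying assignment of the conjunction of these constraints is exactly a crossing-maximal simple drawing of $K_n$ in which no edge is uncrossed, that is, a counterexample to \cref{conj:uncrossededgecrmax}; conversely, unsatisfiability for a given~$n$ establishes the conjecture for that~$n$.

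Next I would feed the resulting CNF to CaDiCaL for each~$n$ from small values up to $16$, relying on the default naturalness symmetry breaking (and, where it helps, the lex-minimality restriction) to prune relabellings and reflections, and record UNSAT in every case. For the larger instances I would additionally export the CNF (flag \verb|-o|) and have CaDiCaL emit a DRAT certificate, which can then be independently verified with DRAT-trim, so that the unsatisfiability -- and hence the theorem -- does not rest on trusting the solver alone. Taken together this yields \cref{conj:uncrossededgecrmax} for all $n\le 16$.

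The main obstacle I expect is one of scale rather than of mathematical subtlety: the drawability part of the encoding already contributes $\Theta(n^5)$ clauses, so at $n=16$ the combined instance is large enough that solving time and, even more so, certificate size and its verification become the bottleneck. Mitigating this calls for careful symmetry breaking and the stronger length-$3$ cyclic-permutation clauses of \cref{ssec:preRS} that CaDiCaL was observed to benefit from. On the positive side, the interplay of the \verb|-crmax| and \verb|-aec| constraints is highly favourable: demanding a crossing in every $4$-subset while simultaneously forbidding any uncrossed edge is strongly over-constrained, which tends to produce short refutations and is what makes reaching $n=16$ feasible at all.
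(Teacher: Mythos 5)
Your proposal matches the paper's proof: the paper verifies \cref{conj:uncrossededgecrmax} for $n\le 16$ by running the \SAT{} framework with exactly the flags \verb|-crmax| and \verb|-aec| you describe (encoding that every $4$-subset induces a crossing while every edge is crossed, on top of the drawability encoding of \cref{ssec:RS}) and obtaining unsatisfiability with CaDiCaL, the computations for $n=14,15,16$ taking about 3, 8, and 35 CPU hours. Your reconstruction of the clause-level encoding and the certificate-based verification is consistent with the framework's implementation, so this is essentially the same argument.
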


When restricted to convex drawings, all
drawings with $n \leq 10$ have an uncrossed edge.
For $n = 11, \ldots, 21$, there are h-convex
drawings where every edge is crossed. 
The computations for $n=22$ timed out after 8 CPU days.
Kyn\v{c}l and Valtr~\cite{KynclValtr09} showed that for sufficiently large $n$ there is an h-convex drawing of $K_n$ without uncrossed edge.
We conjecture:

\begin{conjecture}
	\label{conjecture:hconvex_aec}
	For $n \ge 11$ there is an h-convex drawing of
	$K_n$ where every edge is crossed.
\end{conjecture}

\subsection{Quasiplanarity and Crossing Families}
\label{sec:crossingfamilies}
In simple drawings there are not only unavoidable plane structures, but also
unavoidable crossing structures. Given a drawing of $K_{n}$, a set of $k$
independent edges that cross pairwise is called a \emph{$k$-crossing family}.
A simple drawing without a $k$-crossing family is called
\emph{$k$-quasiplanar}. Brandenburg~\cite{Brandenburg2016} found a 3-quasiplanar drawing of $K_{10}$ which is moreover
h-convex and strongly c-monotone. 
As shown by Pitchanathan and Shannigrahi
\cite{PitchanathanShannigrahi2019}, no simple drawing of $K_{11}$ is
3-quasi\-planar, that is, every simple drawing of $K_{11}$ contains a
3-crossing family. 
We reproduced this results with the \SAT{} framework and parameter \verb|-crf 3|, which forbids crossing families of size~3.

Aichholzer et al.~\cite{AichholzerAurenhammerKrasser2001} used complete
enumeration to show that for $n \geq 9$ every geometric drawing of
$K_{n}$ has a $3$-crossing family and this is the smallest possible value. 
Moreover,  Aichholzer et al.\ \cite{AKSVV2021}
utilized \SAT{} solvers to show that 
for $n \geq 15$ every geometric drawing of
$K_{n}$
has a 4-crossing family.
Our framework found an h-convex drawing of $K_{16}$ without a 4-crossing
family after about 1 CPU day. We provide the example as supplemental data~\cite{supplemental_data_unblind}.

For crossing-maximal drawings there seem to be large crossing families:
\begin{conjecture}
    Every crossing-maximal drawing of $K_n$ contains a $\lfloor \frac{n}{2} \rfloor$-crossing family.
\end{conjecture}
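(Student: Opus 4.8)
The plan is to first translate crossing-maximality into a purely combinatorial condition on the rotation system and only then hunt for a large pairwise-crossing matching. Since a simple drawing of $K_4$ has at most one crossing (the only simple $K_4$ drawings are the crossing-free one and the one-crossing one; cf.\ \cref{fig:rs_n4_noniso}), a drawing of $K_n$ attains the maximum of $\binom{n}{4}$ crossings exactly when \emph{every} $4$-element subset induces a crossing. By \cref{observation:basics}\eqref{item:obstructionfour_crossingsdetermined} these crossings are already determined by the rotation system, so a crossing-maximal drawing is nothing but a choice, for each $4$-subset $\{a,b,c,d\}$, of which of the three pairings $\{ab,cd\}$, $\{ac,bd\}$, $\{ad,bc\}$ crosses, subject to the realizability constraints of \cref{theorem:rotsys_5tuples_characterization}. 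The target then becomes: find a matching $M$ with $\lfloor n/2\rfloor$ edges so that for all $e,f\in M$ the crossing on $e\cup f$ sits exactly at the pairing $\{e,f\}$.

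As a warm-up, and to certify that the bound is tight, I would first settle the two extreme structures. In a convex-position drawing (points on a circle, straight edges) two independent edges cross iff their endpoints interleave in the cyclic order $v_1,\dots,v_n$; here the $\lfloor n/2\rfloor$ ``diameters'' $\{v_i,v_{i+\lfloor n/2\rfloor}\}$ pairwise interleave and form a crossing family, which no matching can exceed. At the opposite end stands the twisted drawing, where two edges cross iff one \emph{nests} inside the other; then the nested chain $\{v_1,v_n\},\{v_2,v_{n-1}\},\dots$ is pairwise crossing, again of size $\lfloor n/2\rfloor$. Both drawings are crossing-maximal, yet they are governed by genuinely different global orders (interleaving versus nesting), which already shows that no single cyclic-order construction can cover all cases.

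For a general crossing-maximal drawing I would aim for an Erd\H{o}s--Szekeres/Dilworth-type extraction. Fix an arbitrary linear order on the vertices and, for each pair of edges, record whether the induced crossing is of ``interleaving'', ``nesting'', or ``separated'' type; one then tries either to pass to a large homogeneous sub-family (reducing to one of the two solved extremes) or to argue inductively by deleting a carefully chosen pair of vertices $u,w$ whose edge $uw$ crosses a whole crossing family in the remaining $K_{n-2}$, and then invoking the bound for $n-2$. A key intermediate step would be a structure theorem describing how the local crossing types of overlapping $4$- and $5$-subsets are forced to fit together, exploiting the forbidden convex configurations $\obstructionconvexFiveA$ and $\obstructionconvexFiveB$ to pin down the admissible ``mixed'' drawings.

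The main obstacle is precisely this general case. Because the target $\lfloor n/2\rfloor$ is tight and essentially asks for a \emph{perfect} pairwise-crossing matching, there is no slack: greedy, probabilistic, or counting arguments that lose even a constant fraction of vertices cannot reach it, and any extension step must guarantee, at every stage, an edge among the untouched vertices that crosses \emph{all} previously selected edges. Controlling this for mixed crossing-maximal drawings -- which need admit no global interleaving or nesting order at all -- is the crux, and is presumably why the statement is only conjectured here and verified via the \SAT{} framework for small $n$ rather than proved in general.
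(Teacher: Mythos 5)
Your proposal is not a proof, and you in effect concede this yourself. The reformulation of crossing-maximality (every $4$-subset induces exactly one crossing, determined by the rotation system) is correct, and your two extremal examples are sound: the interleaving diameters in $\calC_n$ and the nested chain in $\calT_n$ are indeed pairwise-crossing matchings of size $\lfloor n/2\rfloor$ in crossing-maximal drawings. But these are just two particular drawings, while the statement quantifies over \emph{all} crossing-maximal drawings, and the Erd\H{o}s--Szekeres/Dilworth extraction you sketch for the general case is precisely the step you admit you cannot carry out: Ramsey-type homogenization loses far more than a constant fraction of the vertices, so it cannot reach the tight target $\lfloor n/2\rfloor$, and your inductive alternative (delete $u,w$ such that $uw$ crosses a whole crossing family of the remaining $K_{n-2}$) is stated without any argument that such a pair exists. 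There is also a concrete misstep in the sketch: you propose to exploit the convexity obstructions $\obstructionconvexFiveA$ and $\obstructionconvexFiveB$, but these cannot be forbidden here, since $\obstructionconvexFiveA$ is exactly $\calT_5$, which not only occurs in, but entirely fills, the twisted crossing-maximal drawings; only the drawability obstructions $\obstructionFiveA$ and $\obstructionFiveB$ are excluded. The structural fact you could legitimately use instead is that in a crossing-maximal drawing every $5$-subset induces $\calC_5$ or $\calT_5$ (the only two crossing-maximal drawings of $K_5$), though even this does not by itself produce the matching.

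For context, the paper does not prove this statement either: it is posed as a conjecture, and the paper's entire support is a \SAT{}-based verification for small cases ($n=8$ and $n=10$, about 2 CPU seconds and 2 CPU hours respectively), i.e., an exhaustive check over all crossing-maximal rotation systems of those sizes. Your proposal contributes something that computation does not, namely the explicit extremal families showing the bound $\lfloor n/2\rfloor$ cannot be improved, but it verifies the conjecture for no value of $n$, whereas the paper's approach at least settles it completely for the small cases it reports. To match the paper you would need either to complete the general extraction argument (which remains open) or to perform the finite verification.
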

The computations for $n=8,10$ took about 2 CPU seconds and 2 CPU hours, respectively.

\subsection{Generalized \texorpdfstring{Erd\H{o}s}{Erdös}--Szekeres Theorem}
\label{sec:unavoid}

The Erd\H{o}s--Szekeres theorem asserts that every sufficiently large point
set in the plane in general position contains a subset of $k$ points in convex
position (i.e., no point lies in the convex hull of the others). In the
context of geometric drawings this immediately implies that every sufficiently
large geometric drawing contains a subdrawing weakly isomorphic to the
\emph{perfect convex} drawing $\calC_k$. Here $\calC_k$ denotes the geometric graph
where the $k$ vertices are in convex position. A drawing of $K_k$ is weakly
isomorphic to $\calC_k$ if and only if there exists a labeling $1, \ldots, k$ of
the vertices such that $ac$ crosses $bd$ for all $1 \leq a<b<c<d \leq k$.
Recall that this defines the weak equivalence class of $\calC_k$ 
(\cref{prop:PRS_different_ATgraphs}).

Harborth and Mengersen~\cite{HarborthMengersen1992} showed that the Erd\H{o}s--Szekeres
theorem does not generalize to simple drawings. For every $k \geq 5$ the
\emph{perfect twisted} drawing $\calT_k$ does not contain a subdrawing weakly isomorphic
to~$\calC_5$. The perfect twisted drawing~$\calT_k$ is characterized by the existence of a
vertex labeling such that $ad$ crosses $bc$ if and only if
$1 \leq a<b<c<d \leq k$. 
\cref{fig:convex_C5_and_twisted_T5} shows drawings of $\calC_5$ and~$\calT_5$.

\begin{figure}[htb]
    \centering
    \includegraphics{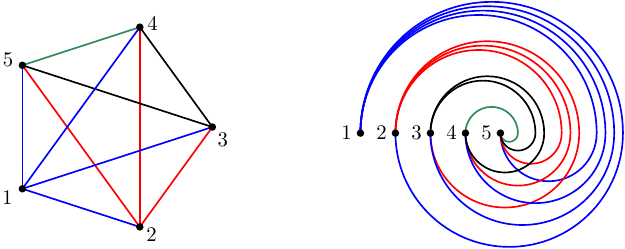}
    \caption{The two unavoidable configurations $\calC_5$ (left) and $\calT_5$ (right).}
    \label{fig:convex_C5_and_twisted_T5}
\end{figure}

Pach, Solymosi, and T\'oth \cite{PachSolymosiToth2003} proved an
Erd\H{o}s--Szekeres-type theorem for simple drawings: for all integers $a$ and $b$
there exists $N=N_{\simple}(\calC_a,\calT_b)$ such that for all $n\geq N$ every
simple drawing of $K_n$ contains $\calC_a$ or  
$\calT_b$ as a subdrawing. Suk and
Zeng~\cite{SukZeng2022} showed the currently best bound:
$ N_{\simple}(\calC_a,\calT_b) \le 2^{9(ab)^2 \log_2(a)\log_2(b)}$.
For the class of c-monotone and strongly c-monotone we analogously define $N_{\cmonotone}(\calC_a,\calT_b)$ and $N_{\strongcmonotone}(\calC_a,\calT_b)$, respectively.

When restricting the question to convex drawings, the perfect twisted drawing
$\calT_5$ (cf. $\obstructionconvexFiveA$)
does not occur. Hence every sufficiently large
convex drawing contains a perfect convex~$\calC_a$. 
For the convex and
\mbox{h-convex} drawings we can thus define the generalized
Erd\H{o}s--Szekeres numbers $N_{\convex}(\calC_a)$ and
$N_{\hconvex}(\calC_a)$, respectively, as the smallest integer~$n$ such that
every convex (resp.\ h-convex) drawing of~$K_n$ contains~$\calC_a$ as  a subdrawing. 
The bound
by Suk and Zeng~\cite{SukZeng2022} immediately gives
$N_{\hconvex}(\calC_a) \le N_{\convex}(\calC_a) \le~2^{O(a^2 \log a )}$.
Similarly, when restricting to generalized twisted drawings, the perfect convex drawing $\calC_5$ does not appear. Let $N_{\gentwisted}(\calT_a)$ be the smallest integer~$n$ such that
every generalized twisted drawing of~$K_n$ contains~$\calT_a$ as a subdrawing. It holds 
$N_{\gentwisted}(\calT_a) \le~2^{O(a^2 \log a )}$~\cite{SukZeng2022}.

Using our \SAT{} framework 
we determined the following values:
\begin{theorem}
    \label{theorem:C5T5}
    It holds  
    $N_{\simple}(\calC_5,\calT_5) = N_{\cmonotone}(\calC_5,\calT_5) = N_{\strongcmonotone}(\calC_5,\calT_5) = 13$,
    $N_{\convex}(\calC_5)= N_{\hconvex}(\calC_5)=11$,
    $N_{\gentwisted}(\calT_6) = 7 $, and
    $N_{\gentwisted}(\calT_7) = 10$.
\end{theorem}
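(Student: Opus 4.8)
The plan is to reduce each of these equalities to a pair of finite computations handled directly by the SAT framework developed in the preceding sections, since each claimed value $N=n_0$ is really the assertion of two facts: every simple drawing (of the relevant subclass) on $n_0$ vertices contains the target subdrawing, and there exists a drawing on $n_0-1$ vertices avoiding it. I would treat the six numbers in \cref{theorem:C5T5} uniformly, observing that all the target configurations $\calC_5$, $\calT_5$, $\calT_6$, $\calT_7$ are themselves characterized purely in terms of crossing patterns of the rotation system (e.g.\ $\calC_k$ by ``$ac$ crosses $bd$ for all $a<b<c<d$'' and $\calT_k$ by ``$ad$ crosses $bc$ iff $a<b<c<d$''), so each ``contains a subdrawing weakly isomorphic to $\calX$'' predicate is expressible over the $C_{abcd}$-variables from \cref{sec:crossingsencoding}.

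The first and most important step is to encode, for a given subclass $\mathcal{S}$ and given $n$, the CNF whose solutions are exactly those rotation systems in $\mathcal{S}$ that avoid the forbidden configuration. Concretely, I would start from the base rotation-system encoding of \cref{ssec:RS} (which enforces \cref{theorem:rotsys_5tuples_characterization}), layer on the subclass constraints from \cref{sec:subclasses} via the appropriate flags (\verb|-cm|, \verb|-scm|, \verb|-c|, \verb|-hc|, \verb|-gt|), and then add clauses forbidding, over every relevant index set $I$ of size $5$, $6$, or $7$, the existence of a labeling realizing the crossing pattern of $\calC_5$ or $\calT_a$. For the upper-bound direction (every drawing on $n_0$ vertices contains $\calX$) I would run CaDiCaL on the CNF for $n=n_0$ with $\calX$ forbidden and verify unsatisfiability; for the lower-bound direction I would run it on $n=n_0-1$ and extract a satisfying assignment, i.e.\ an explicit drawing of $K_{n_0-1}$ avoiding $\calX$.

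For the two-target Erd\H{o}s--Szekeres numbers $N_{\simple}(\calC_5,\calT_5)$ and its (strongly) c-monotone analogues, the forbidding clauses must rule out \emph{both} $\calC_5$ and $\calT_5$ simultaneously, so I would conjoin the two families of forbidden-subconfiguration clauses; unsatisfiability at $n=13$ then certifies $N=13$ and a model at $n=12$ certifies $N>12$. The single-target convex and generalized-twisted numbers are cleaner: since $\calT_5$ cannot occur in a convex drawing and $\calC_5$ cannot occur in a generalized twisted one (as noted in the text via $\obstructionconvexFiveA$), only one configuration need be forbidden. I expect the computations to be made substantially cheaper by the symmetry-breaking of the framework (naturality by default, optionally \verb|-lex|), which removes the factor of $2n!$ from relabelings and reflections.

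The main obstacle will be purely computational feasibility rather than conceptual difficulty: the crossing-based forbidden-subconfiguration clauses range over all $\binom{n}{k}$ index sets with $k\in\{5,6,7\}$ and, for each, over the $k!$ candidate labelings (quotiented by the symmetries of the pattern), so the instance sizes grow quickly and the unsatisfiability proofs at $n=13$ are the bottleneck. To make the $N_{\simple}=13$ case tractable I would lean on the c-monotone reduction, noting that the equality $N_{\simple}(\calC_5,\calT_5)=N_{\cmonotone}(\calC_5,\calT_5)=N_{\strongcmonotone}(\calC_5,\calT_5)=13$ itself suggests that an extremal example on $12$ vertices can be taken strongly c-monotone, so solving the smaller strongly-c-monotone instance first yields both the lower bound and a structural simplification; the hard part is confirming that no strongly-c-monotone restriction changes the threshold, which the framework settles by checking all three subclasses independently. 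Finally, to make the unsatisfiability results trustworthy I would export DRAT certificates (flag \verb|-o|) and verify them with DRAT-trim, exactly as done for the Rafla-conjecture computations.
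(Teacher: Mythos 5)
Your proposal is correct and takes essentially the same approach as the paper: each value is reduced to an unsatisfiability computation at $n_0$ (forbidding $\calC_a$ and/or $\calT_b$ over the crossing variables, combined with the subclass flags) plus an explicit witness drawing at $n_0-1$, solved with CaDiCaL. The only difference is organizational: the paper exploits the subclass inclusions to run just one UNSAT check for the largest class and one witness search in the smallest class per chain of equalities (e.g.\ unsatisfiability for simple drawings at $n=13$ together with a strongly c-monotone witness at $n=12$, and unsatisfiability for convex drawings at $n=11$ together with an h-convex witness at $n=10$), whereas you propose correct but partly redundant independent checks for each subclass.
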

The parameters \verb|-C a| and \verb|-T b| allow to forbid $\calC_a$ and~$\calT_b$, respectively.
Furthermore we introduced the parameter \verb|-X k| to forbid crossing-maximal subdrawings of~$K_k$. 
Since $C_5$ and $T_5$ are the only two crossing-maximal drawings of $K_5$, \verb|-X 5| and \verb|-C 5 -T 5| give equisatisfiable instances.
It took about 8 CPU seconds to prove $N_{\convex}(\calC_5) \le 11$
and 36 CPU hours to prove $N_{\simple}(\calC_5,\calT_5) \le 13$.
The tightness of the bounds is witnessed by examples showing $N_{\hconvex}(\calC_5) > 10$ and $N_{\strongcmonotone}(\calC_5,\calT_5) > 12$.
Moreover, the program found examples  witnessing
$N_{\hconvex}(\calC_6) > 21$, 
$N_{\simple}(\calC_6,\calT_5) > 23$,  
$N_{\simple}(\calC_5,\calT_6) > 16$,
and 
$N_{\gentwisted}(\calT_8) > 12$;
see the supplemental data~\cite{supplemental_data_unblind}.

The corresponding number for geometric drawings, $N_{\text{geom}}(\calC_k)$, is
the classical Erd\H{o}s--Szekeres number.
It is of order $2^{k+o(k)}$
\cite{Suk2017}.
This motivates the question, 
whether one of the above Ramsey numbers, in particular
$N_{\strongcmonotone}(\calC_a,\calT_a)$,
$N_{\gentwisted}(\calT_a)$,
and $N_{\hconvex}(\calC_a)$, can also be bounded by an exponential function
$c^a$ for some constant~$c$.

\subsection{Number of Empty Triangles}
\label{sec:emptytriang}

An \emph{empty triangle} in a simple drawing of $K_n$ is a triangle
induced by three vertices such that the interior of one of its two sides does
not contain any vertex.
Let $\triangle(n)$ denote
the minimum number of empty triangles in all simple drawings of $K_n$. Harborth \cite{Harborth98} proved
the bounds $2 \le \triangle(n) \le 2n-4$ and conjectured the upper bound to be optimal.

\begin{conjecture}[\cite{Harborth98}]
\label{conjecture:empty_triangles}
It holds $\triangle(n)=2n-4$.
\end{conjecture}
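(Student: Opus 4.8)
The statement splits into two inequalities. The upper bound $\triangle(n)\le 2n-4$ is already supplied by Harborth's explicit family of drawings, so the whole difficulty lies in the matching lower bound $\triangle(n)\ge 2n-4$; that is what I would attack.

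Since a simple drawing is determined (for all the questions at hand) by its rotation system, my first move is to express ``empty triangle'' purely combinatorially. A triangle $T=\{a,b,c\}$ is a closed curve splitting the sphere into two sides, and for any fourth vertex $d$ the side containing $d$ is decided by the crossing pattern of the $K_4$ on $\{a,b,c,d\}$, which by \cref{observation:basics}\eqref{item:obstructionfour_crossingsdetermined} is read off from the rotation system via the variables $C_{abcd}$. I would introduce side-indicator variables $S_{abc,d}$ fixing, with a consistent orientation, on which side of $T$ the vertex $d$ lies, clause them to the $C$-variables, and then force a variable $E_{abc}$ to be true exactly when no two of the remaining $n-3$ vertices fall on opposite sides of $T$, i.e.\ when one side of $T$ is empty.

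With the $E_{abc}$ in hand the lower bound becomes a single cardinality statement. On top of the rotation-system clauses of \cref{ssec:RS} and the symmetry-breaking clauses I would add a constraint $\sum_{\{a,b,c\}} E_{abc}\le 2n-5$ (through a sequential-counter or sorting-network encoding) and ask the solver to refute it. Unsatisfiability for a given $n$ certifies $\triangle(n)\ge 2n-4$, and hence Harborth's conjecture for that $n$; exporting a DRAT certificate makes the refutation independently checkable, exactly as for the plane-Hamiltonicity results above.

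The genuine obstacle is that this settles only one value of $n$ at a time, and the cardinality constraint over $\binom{n}{3}$ variables, stacked on the $\Theta(n^5)$ obstruction clauses, makes the instances grow quickly, so only small $n$ are within reach. A proof valid for \emph{all} $n$ would instead require a structural or inductive argument --- for instance, deleting a carefully chosen vertex $v$ from $\calD$ and arguing that the (at least two) empty triangles destroyed by its removal are never fully compensated by empty triangles created in $\calD-v$. It is precisely the lack of control over how empty triangles appear and vanish under vertex deletion that keeps such a uniform argument out of reach, which is why the statement remains a conjecture rather than a theorem.
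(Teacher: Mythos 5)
Your proposal takes essentially the same route as the paper: since the statement is a conjecture, the paper likewise only verifies it computationally (for $n \le 10$, via the \texttt{-etupp} parameter), using precisely your strategy of expressing emptiness of a triangle's sides through the $K_4$ crossing patterns determined by the rotation system and then refuting an ``at most $2n-5$ empty triangles'' constraint with a SAT solver. Your concluding remark --- that this settles only one $n$ at a time and that a uniform structural or inductive argument is the missing ingredient --- is exactly why the statement remains a conjecture in the paper as well.
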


In a recent paper, García et al.~\cite{GarciaTejelVogtenhuberWeinberger2022} showed that generalized twisted drawings have exactly $2n-4$ empty triangles. 
Ruiz--Vargas \cite{RuizVargas2013} proved $\triangle(n) \ge \frac{2n}{3}$. 
The bound was further improved by Aichholzer et al.\
\cite{AichholzerHPRSV2015} to $\triangle(n) \ge n$.
Moreover, \'Abrego et al.\ \cite{AichholzerHPRSV2015,AbregoAFHOORSV2015} used
the database of rotation systems to show that $\triangle(n) = 2n-4$ holds for
$n \leq 9$.
We reproduced this result and further verified the conjecture for $n=10$ with our \SAT{} framework. The parameter \verb|-etupp k| asserts that there are at most $k$ empty triangles; see \cref{app:encodings_emptytriangles} for more details on the encoding. 
The computations for $n=7,8,9,10$ took about 10 CPU seconds, 25 CPU minutes,  46 CPU hours, and 16 CPU days respectively. 

\begin{theorem}
\cref{conjecture:empty_triangles} is true for $n \le 10$.
\end{theorem}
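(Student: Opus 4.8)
The plan is to establish the matching lower bound $\triangle(n) \ge 2n-4$ for $n \le 10$, since Harborth's construction already supplies the upper bound $\triangle(n) \le 2n-4$. Equivalently, I would show that no simple drawing of $K_n$ has fewer than $2n-4$ empty triangles by encoding the negation as a CNF: search for a drawable pre-rotation system on $[n]$ admitting at most $2n-5$ empty triangles, and prove via a \SAT{} solver that the instance is unsatisfiable. Unsatisfiability then yields $\triangle(n)\ge 2n-4$, which together with the upper bound settles the value. For $n\le 9$ this reproves the result obtained from the database of \'Abrego et al., so the genuinely new content is the case $n=10$.

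First I would build on the base encoding of \cref{ssec:RS}, which by \cref{theorem:rotsys_5tuples_characterization} restricts solutions to drawable rotation systems via the $\obstructionFour$, $\obstructionFiveA$, $\obstructionFiveB$ obstruction clauses; in particular the system is $\obstructionFour$-free, so by \cref{observation:basics} the crossing variables $C_{abcd}$ of \cref{sec:crossingsencoding} are well defined. The crucial new ingredient is to detect, for each triangle $\{a,b,c\}$ and each further vertex $v$, on which of the two sides of the triangle $v$ lies. Since the triangle edges $ab,ac$ do not cross the edge $va$, the side of $v$ is pinned down by two pieces of local data: the angular sector into which $va$ emanates at $a$ (read off from the rotation $\pi_a$, i.e.\ from the variable $Y_{abvc}$ recording whether $b,v,c$ appear in this cyclic order), and whether $va$ crosses the opposite edge $bc$, namely $C_{vabc}$. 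I would introduce an auxiliary variable fixed by the parity $S_{abc,v}\leftrightarrow Y_{abvc}\oplus C_{vabc}$, together with an emptiness indicator $E_{abc}$ that is forced true exactly when all $S_{abc,v}$ agree, i.e.\ when one open side of the triangle is vertex-free; this is correct because every vertex lies strictly in one of the two open regions bounded by the triangle.

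Having the $\binom{n}{3}$ indicators $E_{abc}$, I would add a cardinality constraint $\sum_{\{a,b,c\}} E_{abc}\le 2n-5$ (via a standard sequential-counter or totalizer encoding) and hand the instance to CaDiCaL, certifying any \false{} outcome with an independent DRAT proof. The main obstacle is computational rather than conceptual: for $n=10$ the instance couples the full $\Theta(n^5)$ rotation-system encoding with $\binom{10}{3}=120$ emptiness gadgets and the at-most-$(2n-5)$ counter, producing a hard unsatisfiable formula whose proof search costs on the order of CPU weeks. Keeping the encoding lean -- reusing the $C$- and $Y$-variables directly, choosing an efficient cardinality encoding, and exploiting the natural-labeling symmetry breaking of \cref{ssec:RS} -- is what makes $n=10$ tractable, while pushing to $n=11$ appears out of reach with this approach.
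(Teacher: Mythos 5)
Your proposal is correct and follows essentially the same route as the paper: restrict to drawable rotation systems via the obstruction clauses, introduce per-triangle emptiness indicators derived from the induced $K_4$ subconfigurations, bound their number by $2n-5$ with a cardinality constraint, and conclude the matching lower bound from unsatisfiability (the paper's \verb|-etupp k| encoding in \cref{app:encodings_emptytriangles}). The only cosmetic difference is that you extract side membership via the parity gadget $Y_{abvc}\oplus C_{vabc}$, whereas the paper sets its auxiliary variables $E_{a,b,c}^d$ by enumerating the four $K_4$ rotation-system cases in which $d$ lies inside $S_{a,b,c}$ --- these encode the same information.
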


	Moreover, it is known that every convex drawing of $K_n$ contains at least
	$\triangle_{\convex}(n) \ge \Omega(n^2)$ empty triangles \cite{ArroyoMRS2017_pseudolines}, which is asymptotically tight as
	there exist geometric drawings with $\Theta(n^2)$ 
	empty triangles~\cite{BaranyFueredi1987}. Determining the minimum number of empty triangles in geometric drawings $\triangle_{\textnormal{geom}}(n)$ remains a 
	challenging problem cf.~\cite[Chapter~8.4]{BrassMoserPach2005} and \cite{ABHKPSVV2020_JCTA}.

\subsection{Enumeration}
\label{ssec:enumeration}

Last but not least, our framework can be used with the parameter \verb|-a|  to enumerate the solutions of a CNF. This in particular allows to count isomorphism classes of various classes of drawings; see \cref{table:numbers}. 
It is also possible to enumerate examples from the intersection of various classes by simply activating the desired flags, e.g., \verb|-hc -scm| restricts to rotation systems that are both h-convex and strongly c-monotone.
However, for enumeration and counting, classic approaches such as the Avis--Fukuda reverse search technique \cite{AvisFukuda1996} (as used in \cite{AbregoAFHOORSV2015}) seem to be better suited. For
SAT solvers that are based on the CDCL algorithm, one needs to add a clause 
for every found solution to prevent it from reappearing and therefore the size of the instance and the memory consumption grows linear in the number of objects.

\begin{table}[htb]
\centering
\small
\def\arraystretch{1.2}
\begin{tabular}{ r|rrrrrr }
	& all 
	& convex 
	& h-convex
	& c-monotone
	& str.\ c-monotone
	& gen.\ twisted
	\\
	\hline
	n = 4
	&2
	&2
	&2
	&2
	&2
	&1
	\\
	n = 5
	&5
	&3
	&3
	&5
	&5
	&1
	\\
	n = 6
	&102
	&16
	&15
	&102
	&95
	&3
	\\
	n = 7
	&11 556
	&139     
	&126    
	&11 556   
	&8 373
	&9 
        \\
	n = 8        
	&5 370 725$^\ast$
	&3 853   
	&3 394  
	&?
	&?  
	&32  
        \\    
	n = 9        
	&7 198 391 729$^\ast$
	& 215 105 
	& 183 982
	& ?
	& ? 
	& 115 
        \\
\end{tabular}
\caption{
Number of isomorphism classes of simple drawings of~$K_n$.
The entries marked with $\ast$
were computed in \cite{AbregoAFHOORSV2015} 
and were not verified by our framework.
}
\label{table:numbers}
\end{table}

\section{Discussion}
\label{sec:discussion}

To tackle a comprehensive list of conjectures on simple drawings of~$K_n$,
we developed SAT encodings for various properties 
and subclasses.
Our highly parameterized 
python framework 
allows to create a CNF instance such that the solutions correspond to drawings  with specified properties.
Modern SAT solvers 
are used to decide whether there exist such drawings.

To optimize the overall performance,
we experimented with the optimization tool SBVA \cite{SBVA2023}.
It introduces auxiliary variables into a CNF to reduce the number of clauses while preserving equisatisfiablity. 
Even though this approach does not always lead to a better performance of the SAT solver, 
we analyzed all instances where SBVA lead to a speedup and incorporated the learned patterns into our encoding.

Note that every CNF can be partitioned into independent parts (so-called cubes)
which can then be solved in parallel \cite{HKWB2012_cubes}.
However, it is a highly non-trivial task to find a good partition such that all cubes are of similar complexity
and a bad partitioning  
can lead to significantly worse computing times in total.

	\paragraph{Aesthetic drawings}

	Given the abstract graph of a planarization corresponding to drawing $D$ of a
	rotation system, we can find a plane drawing with explicit coordinates from
	the $m \times m$ integer grid by applying Schnyder's algorithm
	\cite{Schnyder1990}, where $m$ denotes the number of vertices of the planarization. Such drawings on the grid however
	do not look very aesthetic because vertices can be very close to edges and
	faces may be represented by arbitrary polygons.
	Instead we compute the embedding using iterated Tutte embeddings  to create an aesthetic visualization of
	the planarization, where every face is drawn convex and edges and faces are
	within a certain ratio. 
    We used a slight modification of the implementation from Felsner and
	Scheucher \cite{FelsnerScheucher2020}, see also the \verb|visualize.sage|  script in \cite{github_PCA}.
    Based on this drawing of the planarization,
	we visualize the original drawing of $K_n$ by drawing its edges
	as a Beziér curve,
    as illustrated in 
	the computer-generated graphic \cref{fig:computer_vis_5} in the appendix.
    Note that each original edge consists of a sequence of edges in the planarization.


\bibliography{references}

\appendix

\clearpage
\section{\SAT{} Encoding for Drawability of Pre-Rotation Systems 
}
\label{app:characterizationRS}

To give an independent proof of \cref{proposition:rotsys_classification_n6}, we model a CNF for deciding whether a given pre-rotation system is drawable.

In the following, we set up the desired Boolean formula that encodes a potential planarization -- if one exists.
The \emph{planarization} of a given drawing $\calD$ is the planar graph $G$ obtained by placing an auxiliary \emph{cross-vertex} at the position of every crossing point of $\calD$ and accordingly subdividing the two crossing edges. Each cross-vertex has degree four and increases the number of edges by two.
\cref{fig:planarization} gives an illustration.

\begin{figure}[htb]
    \centering
    \includegraphics[scale=0.9]{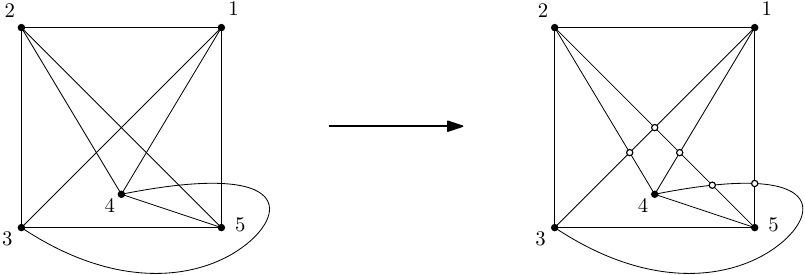}
    \caption{(left) A simple drawing of $K_5$ and 
    (right) its planarization. The auxiliary cross-vertices are depicted as circles.} 
    \label{fig:planarization}
\end{figure}

To decide whether a given pre-rotation system $\Pi$ is drawable we need to find a planarization of a drawing belonging to~$\Pi$.
By \cref{observation:basics}\eqref{item:obstructionfour_notdrawable} $\Pi$ is not drawable if it contains~$\obstructionFour$.
Hence we assume that $\Pi$ is $\obstructionFour$-free.
Now \cref{observation:basics}\eqref{item:obstructionfour_crossingsdetermined} implies that
the pairs of crossing edges $X$ can be inferred from the induced 4-tuples.
In particular, for each edge $e \in E$ 
we know which crossings~$X_e$ occur along it. 
The missing information to get a planarization 
-- if there exists one -- is the order of the crossings along the edges.

For every edge~$e = \{u,v\} \in E$ with $u<v$,
let $\Sigma_e$ be a permutation of the crossings $X_e$ along~$e$, which we extend by adding~$u$ as the first element and~$v$ as the last element.
We define the graph $G_{\Sigma}$ where the vertex set consists of the original elements $[n]$ and the crossings~$X$, 
and for every $e=\{u,v\}\in E$ with $u<v$,
we connect every pair of consecutive crossings from~$\Sigma_e$.
	
Our aim is to find an assignment for all permutations $\Sigma_e$, $e \in E$,
such that the corresponding graph $G_{\Sigma}$ is a planar graph. 
To ensure that the desired graph is planar, 
we use Schnyder's characterization of planar graphs~\cite{Schnyder1989}, which we describe in more detail in \cref{app:planar_enc}.

\begin{proposition}
\label{proposition:planarization_encoding_correct}
    Let $\Pi$ be a pre-rotation system.
    If $\Pi$ is drawable, 
    then there exists an assignment $\Sigma$ such that $G_\Sigma$ is planar.
    Moreover, 
    every planar $G_\Sigma$ 
    is the planarization of a drawing with rotation system~$\Pi$.
\end{proposition}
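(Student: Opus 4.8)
The plan is to prove the two implications separately, since the forward direction is essentially bookkeeping while the backward direction carries the real content.

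For the forward direction, I would fix a simple drawing $\calD$ realizing the drawable $\Pi$. Its planarization is by definition a planar graph. Along each edge $e=\{u,v\}$ the crossing points occur in a well-defined linear order as one traverses the arc from $u$ to $v$, and I take $\Sigma_e$ to be exactly this order. By \cref{observation:basics}\eqref{item:obstructionfour_crossingsdetermined} the crossing pairs $X$ inferred from $\Pi$ coincide with the genuine crossings of $\calD$, so the graph $G_{\Sigma}$ assembled from the $\Sigma_e$ is literally the planarization of $\calD$, and hence planar. This direction therefore amounts to unwinding the definitions.

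For the backward direction, I would start from a planar embedding of $G_{\Sigma}$ on the sphere. For each edge $e=\{u,v\}$ the construction guarantees that $u$, $v$ together with the cross-vertices of $e$ form a path $P_e$ in $G_{\Sigma}$, and routing the edge $e$ of $K_n$ along $P_e$ yields a drawing $\calD$ whose only multiple points are shared endpoints and cross-vertices. Simplicity is then immediate from the structure of $X$: two adjacent edges meet only in their common vertex, because no cross-vertex is assigned to an adjacent pair, and two independent edges share at most the single cross-vertex that this pair contributes to~$X$. What remains, and where I expect the real work to lie, is to check that $\calD$ genuinely realizes $\Pi$: that each cross-vertex is a \emph{true} crossing and that the rotation around every vertex $v$ equals $\pi_v$.

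This verification is the main obstacle, because the planar embedding a priori respects only the abstract adjacencies of $G_{\Sigma}$. Two things must be argued. First, at every cross-vertex the four incident path-segments must alternate between the two edges meeting there; I would establish this locally, noting that a non-alternating cyclic order could be smoothed away, which would delete a crossing mandated by $X$ and thus contradict that the pair lies in~$X$ (which by \cref{observation:basics}\eqref{item:obstructionfour_crossingsdetermined} records exactly the crossings forced by $\Pi$). Second, the cyclic order in which the first segments leave a vertex $v$ must reproduce $\pi_v$; here I would reduce to the induced four-element subdrawings. Since $\Pi$ is $\obstructionFour$-free, the rotation on each four-element subset is determined by its crossing pattern up to reflection, these crossing patterns agree with $\Pi$ by construction, and the global orientation of the embedding together with the orders $\Sigma_e$ pins down the reflections consistently, so that the rotations recovered at all vertices are precisely those of~$\Pi$.
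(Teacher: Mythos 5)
Your forward direction is fine and coincides with the paper's. Your backward direction also has the right skeleton—the paper likewise reduces it to exactly your two claims (cross-vertices are proper crossings; the rotation system of the resulting drawing is~$\Pi$)—but both of your arguments for these claims have genuine gaps. For properness, your smoothing argument is circular: you derive a contradiction from "deleting a crossing mandated by~$X$", but nothing forces a plane drawing of the abstract graph $G_\Sigma$ to respect $X$ as actual crossings—that is precisely what is being proven; the embedding is constrained only by the adjacencies of $G_\Sigma$. The contradiction must come from the graph structure. The paper gets it as follows: if the cross-vertex $c$ of $\{u,v\}$ and $\{u',v'\}$ were a touching, restrict the plane drawing to the planarization of the $K_4$ on $\{u,v,u',v'\}$; this graph is a wheel, namely the $4$-cycle $u,u',v,v'$ formed by the four uncrossed edges plus the hub~$c$, and in any plane embedding of a wheel the spokes around the hub must follow the rim order, i.e.\ alternate between the two subdivided edges. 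With a touching order such as $u,v,v',u'$, the rim edge $u'v$ cannot be drawn without a crossing—a contradiction with planarity of the embedding, not with any "mandate" of~$X$.

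For the rotation recovery, asserting that the crossing patterns on $4$-element subsets together with "the global orientation of the embedding" pin down all reflections consistently is a hand-wave over the actual mathematical content. Knowing each induced $4$-element configuration only up to reflection, plus the crossing pairs, does not obviously determine the pre-rotation system up to a \emph{global} reflection: a priori the reflection choices could be mixed (some vertices' rotations reflected, others not), and a single orientation bit of the embedding cannot exclude per-vertex inconsistencies. This is exactly \cref{prop:PRS_different_ATgraphs} of the paper (crossing pairs of two $\obstructionFour$-free pre-rotation systems coincide iff one equals the other or its reflection), and its proof is not a formality: the troublesome case where every $\pi_v$ equals $\pi'_v$ or its reflection, but not uniformly, is handled by passing to $5$-element subconfigurations, and the resulting base case is settled by a computer enumeration of all $\obstructionFour$-free pre-rotation systems on $5$ elements (a weaker version, for drawable systems, is Kyn\v{c}l's Proposition~2.1(1)). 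Your proposal needs to either invoke this proposition or supply an argument of comparable strength; your $4$-element reduction does not suffice. Finally, a small point you and the paper both elide: this proposition only yields that the recovered rotation system equals $\Pi$ \emph{or its reflection}, so one must finish by mirroring the plane drawing in the latter case.
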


\begin{proof}
    The first part is straight-forward.
    If $\Pi$ is drawable, we consider a drawing and obtain $G_\Sigma$ as its planarization.
    
    For the second part,
    suppose that $G_{\Sigma}$ is a planar graph
    and consider a plane drawing $\calD$ of~$G_{\Sigma}$.
    We show that 
    \begin{enumerate}[(i)]
        \item 
        \label{item:propercrossings}the auxiliary cross-vertices are drawn as proper crossings; and 
        \item 
        \label{item:samers}the rotation system $\Pi_\calD$ of the drawing $\calD$ equals $\Pi$.
    \end{enumerate}
    
    To show \eqref{item:propercrossings},
    suppose that the cross-vertex $c$ of the edges $\{u,v\}$ and $\{u',v'\}$ of $K_n$ 
    is drawn as a touching, i.e., the endpoints of the edges are consecutive in the cyclic order around $c$. Without loss of generality we assume that they appear as follows $ u,v,v',u'$.
    We consider the planarization of the subdrawing of $K_n$ induced by the four vertices $ u',v',v,u$.
    Since a $K_4$ has at most one crossing, $c$ is the only crossing vertex in this subdrawing. 
    As illustrated in \cref{fig:propercrossing}
    not all edges of this planarization of $K_4$ can be drawn without crossings, 
    which is a contradiction.
    Hence, in~$\calD$ all crossing-vertices are drawn as proper crossings.
    
    Since the pairs of crossing edges are the same for $\Pi$ and $\Pi_\calD$, \eqref{item:samers}
    follows from the following \cref{prop:PRS_different_ATgraphs}.
\end{proof}

\begin{figure}[htb]
    \centering
    \includegraphics{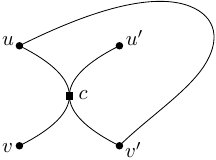}
    \caption{The vertices $u'$ and $v$ cannot be connected in a plane way.}
    \label{fig:propercrossing}
\end{figure}

\begin{proposition}
    \label{prop:PRS_different_ATgraphs}
    Let $\Pi$ and $\Pi'$ be two $\obstructionFour$-free pre-rotation systems on $[n]$.
    The pairs of crossing edges of $\Pi$ and $\Pi'$ coincide if and only if $\Pi$ is equal to $\Pi'$ or its reflection.
\end{proposition}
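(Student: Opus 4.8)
The plan is to prove both implications, with the reverse (``reconstruction'') direction carrying all the weight.

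For the forward (``if'') direction I would argue that both equality and reflection preserve the set of crossing pairs. Equality is immediate. For reflection, recall that by \cref{observation:basics}\eqref{item:obstructionfour_crossingsdetermined} the crossing pairs of a $\obstructionFour$-free system are read off from its induced $4$-element subconfigurations, and on a $4$-set the reflected rotation system corresponds exactly to the reflected drawing of the associated $K_4$, which has the same crossing pair: a crossing-free $K_4$ stays crossing-free, and for a crossing $K_4$ the two rotation systems realizing a fixed crossing pair are precisely reflections of one another, as noted in \cref{sec:crossingsencoding}. Hence $\Pi$ and its reflection share all crossing pairs.

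For the ``only if'' direction I would localize the comparison to single $4$-sets and then glue. For a $4$-subset $T$, the hypothesis gives that $\Pi|_T$ and $\Pi'|_T$ have the same crossing pair, so by the $K_4$-classification $\Pi'|_T$ equals $\Pi|_T$ or its reflection; since reversing a $3$-element cyclic order always changes it, no $4$-set coincides with its own reflection, so exactly one case holds and I may define a sign $\sigma_T \in \{+,-\}$ recording whether $\Pi'|_T$ is ``aligned'' or ``reflected'' relative to $\Pi|_T$. Writing the rotations through their triple-orientations $\chi_v(a,b,c)$ and setting $\delta_v(a,b,c)=\chi'_v(a,b,c)\chi_v(a,b,c)$, the sign $\sigma_T$ is exactly the common value of $\delta$ over the four ``flags'' (a distinguished vertex together with the triple of the others) of $T$, so $\delta_v(a,b,c)=\sigma_{\{v,a,b,c\}}$. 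The entire statement now reduces to showing that $\sigma$ is globally constant: $\sigma\equiv +$ yields $\Pi'=\Pi$ and $\sigma\equiv -$ yields the reflection.

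The main obstacle is precisely this global consistency: the crossing data constrains each $4$-set in isolation, and two distinct $4$-sets never share a single triple-orientation, so there is no direct comparison of their signs. The way out is to use that each $\chi_v$ and $\chi'_v$ is a genuine cyclic order. I would fix a $5$-subset $W=\{1,\dots,5\}$ and exploit the parity invariant that, for any cyclic order on four elements, the product of the orientations of its four triples equals $+$ (visible from the six admissible sign patterns listed in \cref{ssec:preRS}, each having an even number of minus signs). Applying this to $\chi_v$ and to $\chi'_v$ on $W\setminus\{v\}$ and multiplying, the four values $\delta_v$ over the triples of $W\setminus\{v\}$ multiply to $+$; translating through $\delta_v(\text{triple})=\sigma_{\{v\}\cup\text{triple}}$ gives $\prod_{i\neq v}\sigma_{W\setminus\{i\}}=+$ for every $v\in W$. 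Writing $x_i=\sigma_{W\setminus\{i\}}$ and $P=\prod_i x_i$, each constraint reads $P\,x_v=+$, forcing all $x_i$ equal, so all five $4$-subsets of any $5$-set carry the same sign. Finally I would glue: two $4$-sets with $|T\cap T'|=3$ lie in the common $5$-set $T\cup T'$, whence $\sigma_T=\sigma_{T'}$; since the Johnson graph on $4$-subsets (adjacent when meeting in a triple) is connected for $n\ge 4$, the sign is constant throughout, which completes the proof. I expect the only delicate points to be the bookkeeping of the flag/sign correspondence and the verification of the parity invariant, both routine once set up.
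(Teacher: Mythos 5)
Your proof is correct, but it takes a genuinely different route from the paper's. The paper argues the contrapositive: assuming $\Pi$ differs from $\Pi'$ and from its reflection, it exhibits a five-element subset $I$ on which the two systems still differ beyond reflection (two cases: either some single rotation $\pi_v$ differs from $\pi'_v$ and its reverse, or all rotations agree up to reversal but with mixed signs), and then closes the argument by a computer check --- an exhaustive enumeration of all $\obstructionFour$-free pre-rotation systems on five elements (the \texttt{-{}-checkATgraphs} run) verifying that any two such systems that are not reflections of each other have distinct crossing pairs. You instead give a direct, computation-free argument: the per-$4$-set sign $\sigma_T$ (well defined because, by the labeled $K_4$ classification, equal crossing data on a $4$-set forces equality or reflection, and no $4$-element system is its own reflection), the parity invariant that each of the six admissible sign patterns of a cyclic permutation has an even number of minus signs, the resulting relation $\prod_{i \neq v} \sigma_{W\setminus\{i\}} = +$ on every $5$-set $W$ which forces all five of its $4$-subsets to carry the same sign, and connectivity of the Johnson-type graph on $4$-subsets to make the sign globally constant. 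Interestingly, both proofs weld local information together at the five-element scale, but yours does the welding by a parity argument where the paper does it by brute force. What each buys: your proof is self-contained, human-verifiable, and exposes the algebraic mechanism (even parity of triple orientations) that makes crossing data rigid; the paper's proof is short to state given its existing SAT infrastructure and fits the methodological theme of the article, but its correctness rests on the exhaustiveness of the enumeration and the correctness of the checker. The only steps in your write-up that deserve explicit verification are exactly the ones you flag as routine: the labeled classification of $\obstructionFour$-free systems on four elements (two crossing-free systems and, for each fixed crossing pair, two systems, each pair related by reflection --- consistent with what is stated in \cref{sec:crossingsencoding}) and the even-parity check against the six patterns listed in \cref{ssec:preRS}; both hold.
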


A weaker version of \cref{prop:PRS_different_ATgraphs} (restricted to rotation systems) was already proven by Kyn\v{c}l \cite[Proposition~2.1(1)]{Kyncl2020}.
	
\begin{proof}
    If $\Pi$ is equal to $\Pi'$ or its reflection,
    then the pairs of crossing edges clearly coincide (\cref{observation:basics}\eqref{item:obstructionfour_crossingsdetermined}).
    
    For the converse statement, 
    assume that $\Pi$ differs from $\Pi'$ and its reflection.
    We show that we find a set of 
    five vertices $I \subset V$ such that,
    when restricted to $I$,
    the (sub-)pre-rotation system
    $\Pi|_I$ still differs from $\Pi'|_I$ and its reflection.
    
    Case~1: there exists a vertex $v$ 
    such that the cyclic permutation $\pi_v$ differs from $\pi'_v$ and its reflection.
    Then there are four elements $J = \{a,b,c,d\}$ such that,
    when restricted to $J$, the cyclic permutation $\pi_v|_J$ still differs from $\pi'_v|_J$ and its reflection.
    Hence we can choose $I = \{v,a,b,c,d\}$.
    
    Case~2: for every vertex $v$ 
    the cyclic permutation
    $\pi_v$ is equal to $\pi'_v$ or its reflection.
    Since $\Pi$ differs from $\Pi'$ and its reflection,
    there are two vertices $v,w$
    such that $\pi_v$ equals $\pi'_v$
    and $\pi_w$ equals the reflection of~$\pi'_w$.
    By choosing any $a,b,c$ distinct from $v$ and $w$,
    we have five vertices $I=\{v,w,a,b,c\}$ 
    such that $a,b,c$ occur in the same order around~$v$
    and $a,b,c$ occur in the opposite order around~$w$.
    
    Using the program
    we enumerated all $\obstructionFour$-free pre-rotation systems on 5 vertices.
    This can be done with the command
    \begin{lstlisting}
        python rotsys.py 5 -v5 -a --nat --checkATgraphs
    \end{lstlisting}
    As checked by the flag \verb|--checkATgraphs|, 
    any such two pre-rotation systems 
    which are not reflections of each other
    have distinct pairs of crossing edges. 
    Hence the pairs of crossing edges for $\Pi$ and $\Pi'$ are different.
\end{proof}

\propclassificationRS*
	
\begin{proof}
    We combine the \SAT{} frameworks from \cref{sec:all_encoding} for pre-rotation systems 
    and from \cref{app:planar_enc} for planar graphs. 
    As a first step, we use
    the \SAT{} framework to enumerate the
    3 non-isomorphic pre-rotation systems on 4 elements using the command
    \begin{lstlisting}
	python rotsys.py 4 -v4 -a -l 
    \end{lstlisting}
    By \cref{observation:basics}\eqref{item:obstructionfour_notdrawable}
    we know that there are exactly two non-isomorphic rotation systems on 4 vertices and the non-drawable pre-rotation system $\obstructionFour$.
    Next, we enumerate the
    7 non-isomorphic pre-rotation systems on 5 vertices,
    that do not contain~$\obstructionFour$.
    \begin{lstlisting}
        python rotsys.py 5 -v5 -a -l -r2f all5.json0 
    \end{lstlisting}
    Our drawability framework shows 
    that exactly five of them are drawable. 
    \begin{lstlisting}
	sage rotsys_draw.sage all5.json0  
    \end{lstlisting}
    See \cref{fig:computer_vis_5}
    for the computer generated\footnote{%
    We used iterated Tutte embeddings \cite{FelsnerScheucher2020} to automatically generate aesthetic visualizations of the planarizations.   
 Edges
	are drawn as a Beziér curve to make the visualization even more appealing.} 
    drawings of $K_5$.
    The
    two configurations~$\obstructionFiveA$ and~$\obstructionFiveB$ are not drawable.
    Last but not least, 
    we use the framework with the command 
    \begin{lstlisting}
        python rotsys.py 6 -a -l -r2f all6.json0
    \end{lstlisting}
    to enumerate 
    102 non-isomorphic $\obstructionFour$, $\obstructionFiveA$, $\obstructionFiveB$- free pre-rotation systems on 6 elements.
    With the drawing framework we verify 
    that all of them are drawable.
    \begin{lstlisting}
	sage rotsys_draw.sage all6.json0
    \end{lstlisting}
    This completes the proof of \cref{proposition:rotsys_classification_n6}.
\end{proof}
	
\begin{figure}[htb]
    \centering
    \includegraphics[width=0.95\columnwidth]{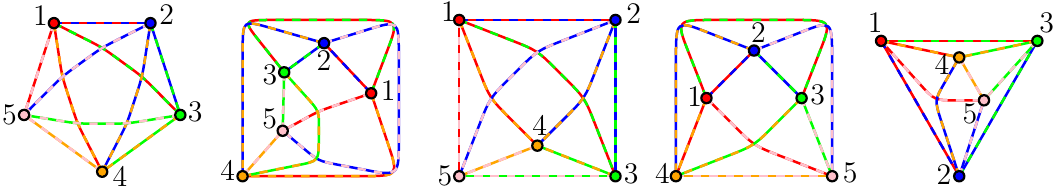}
    \caption{Computer generated drawings of the five types of rotation systems on 5 elements.}
    \label{fig:computer_vis_5}
\end{figure}

\section{\SAT{} encoding for planar graphs}
\label{app:planar_enc}

Schnyder \cite{Schnyder1989}
characterized planar graphs as graphs whose incidence poset is of order dimension at most three. 
That is,
a graph $G=(V,E)$ is planar if and only if there exist three total orders $\prec_1,\prec_2,\prec_3$ such that for every edge $\{u,v\} \in E$ and every vertex $w \in V \setminus \{u,v\}$ it holds $u \prec_i w$ and $v \prec_i w$ for some $i \in \{1,2,3\}$.
This characterization allows to encode planarity in terms of a Boolean formula.
Besides the fact that this approach can be used for planarity testing, 
it can also be used to enumerate all planar graphs with a specified property or 
to decide that no planar graphs with this property exists.
It is worth noting that 
certain subclasses of planar graphs 
such as outer-planar graphs
can be characterized in a similar fashion; see \cite{FelsnerTrotter05}.
For further literature and other planarity encodings suitable for \SAT{}-based investigations we refer the interested reader to Kirchweger et al.\
\cite{KirchwegerSS2023}.

\medskip
	
For every pair of vertices $u,v \in V$ with $u \neq v$
we use Boolean variables $A_{u,v}$ 
to encode whether $\{u,v\}$ is an edge in the graph
and, for any $i \in \{1,2,3\}$, we use Boolean variables $B_{i,u,v}$ to encode whether $u \prec_i v$.
To assert that the $Y$ variables indeed model a total order,
we must ensure the linear order and anti-symmetry with the constraints 
$B_{i,u,v} \vee B_{i,v,u}$ and $\neg B_{i,u,v} \vee \neg B_{i,v,u}$
and transitivity with the constraints 
$\neg B_{i,u,v} \vee \neg B_{i,v,w} \vee B_{i,u,w}$
for every $i \in \{1,2,3\}$ and distinct $u,v,w \in V$.
	
To ensure Schnyder's planarity conditions,
we introduce auxiliary variables $B_{i,u,v,w}$ to encode whether $u \prec_i w$ and $v \prec_i w$ holds for any three distinct variables $u,v,w$ and $i \in \{1,2,3\}$
(that is, $B_{i,u,v,w} = B_{i,u,w} \vee B_{i,v,w}$)
and then assert 
$\neg A_{u,v} \vee \bigvee_{i=1}^3 B_{i,u,v,w}$.

\section{Detailled Encodings}
\label{app:encodings}

\subsection{Plane Substructures}
\label{app:encodings_planesubstructures}
To assert that 
a subset of the edges $E' \subset E(K_n)$ 
does forms a plane subdrawing,
we set the corresponding crossing variables to $\false$, that is, we have a unit-clause 
$ \neg C_{e,f}$
for every pair of non-adjacent edges $e,f \in E'$.
Similarly, we can assert that $E'$ does not form a plane subdrawing with the clause
 $\bigvee_{e,f \in E' \colon e \cap f = \emptyset}
    C_{e,f}$.

To assert that there is no plane Hamiltonian cycle,
we assert that 
for every permutation $\pi$ 
there is at least one crossing pair in the set of edges $E_\pi = \{(\pi(i),\pi(i+1)) : i \in [n]\}$,
where $\pi(n+1)=\pi(1)$.
The encoding comes with $(n-1)!$ clauses
and is only suited for small values of~$n$.
Similarly, we can deal with plane Hamiltonian subdrawings on $2n-3$ edges: We assert that there is at least one crossing formed by every edge set $E' \in \binom{E(K_n)}{2n-3}$ which contains the edges $E_\pi$ of some Hamiltonian cycle~$\pi$.

\subsection{Extending Plane Matchings}
\label{app:encodings_matchings}

To be more specific about the encoding for testing \cref{conjecture:hoffmanntoth_convex}, that is, searching plane Hamiltonian cycles for a prescribed matching: 
we assume towards a contradiction that there exists a convex drawing of $K_n$ 
which has a plane matching $M = \{ \{1, 2\}, \ldots , \{2k-1, 2k\} \}$ with $2k\leq n$ such that every plane Hamiltonian cycle $C$ crosses at least one edge of~$M$.
Note that $M$ and $C$ may share edges.
As we fix the vertices of~$M$,  we cannot further assume without loss of generality that the rotation system
is natural for $k \ge 2$.  
However, to speed up the computations, 
we break further symmetries in the search space:
\begin{itemize}
    \item for every edge $\{u, u+1\} \in M$: $X_{1,2,u,u+1} = \true$, that is, 1 sees 2, $u$, $u+1$ in this order;
    \item for every two edges $\{u, u+1\}, \{u' , u'+1\} \in M$ with $u < u'$: $X_{1,2,u,u'} = \true$, that is, 1 sees 2, $u$, $u'$ in this order;
    
    \item for every $x, y \in [n]\backslash \{1, \ldots, 2k\}$ with $x < y$: $X_{1,2,x,y} = \true$, that is, 1 sees 2, $x$, $y$ in this order.
\end{itemize}

\subsection{Empty $k$-Cycles}
\label{app:encodings_kcycles}

Next we outline our encoding for empty $k$-cycles.
The vertices $\pi_1,\ldots,\pi_k$ form an empty $k$-cycles
if and only if it forms a plane $k$-cycle and all other vertices lie on one common side of the cycle.
Let $E_\pi = \{\{\pi_i,\pi_{i+1}\}: i =1,\ldots,k\}$ with $\pi_{k+1}=\pi_1$ be the edges of the cycle. The first part can easily be formulated as using our auxiliary crossing variables.
For the second part we need to introduce auxiliary variables
$W_{\pi,p,q}$ that indicate whether the edge $pq$ intersects an odd number of edges from~$E_\pi$. 
Since we only consider simple drawings, the $W$ variable indicates whether $p$ and $q$ lie in distinct sides.
To assign the $W_{\pi,p,q}$ variable, 
we consider each of the $2^k$ cases how $pq$ may intersect~$E_\pi$ and assign \true{} or \false, 
depending on whether the number of crossings with $E_\pi$ even or odd.
Note that this gives $n^k 2^k$ clauses in total.
Ultimately, we forbid that $\pi_1,\ldots,\pi_k$ forms an empty $k$-cycles using the clauses
\[
\bigvee_{e,f \in E_\pi} C_e,f  
\vee  
\bigvee_{p,q } W_{\pi,p,q}.
\]

\subsection{Empty Triangles}
\label{app:encodings_emptytriangles}

Empty triangles are exactly the empty 3-cycles and hence
the encoding from \cref{app:encodings_kcycles} can be used.
However, since it comes with $n^5$ auxiliary variables, 
we now describe a more compact encoding that suffices with only $n^3$ auxiliary variables.

We 
distinguish the two sides of a triangle. 
For a triangle spanned by three distinct vertices $a,b,c$, 
one of its sides sees $a,b,c$ in clockwise order and the other one in counterclockwise order. 
We denote by $S_{a,b,c}$ the side of the triangle which has $a,b,c$ in counterclockwise order.
In each of the drawings in \cref{fig:triangle_abc}, $S_{a,b,c}$ is the bounded side and $S_{a,c,b}$ is the unbounded side.
For every three distinct indices $a,b,c$ we introduce a Boolean variable $E_{a,b,c}$ which is $\true$ if the side $S_{a,b,c}$ is empty.
To assert the $E$ variables, we introduce auxiliary variables.
For a point $d \in [n] \setminus \{a,b,c\}$, we define $E_{a,b,c}^d= \false$ if $S_{a,b,c}$ contains $d$.
For four elements, there exist eight rotation systems, the four rotation systems in \cref{fig:rs_n4_noniso} and their reflections.
Hence, we set the variable $E_{a,b,c}^d$ to $\false$ if and only if 
one of the four cases depicted in \cref{fig:triangle_abc} occurs. 

\begin{figure}[htb]
    \centering
    \includegraphics[scale =0.9]{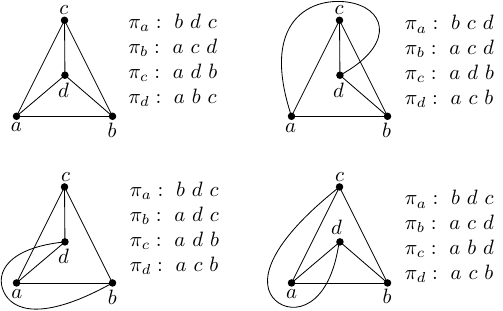}
    \caption{The four cases where $S_{a,b,c}$ contains~$d$.}
    \label{fig:triangle_abc}
\end{figure}

The side $S_{a,b,c}$ is empty if and only if
no point $d$ lies in $S_{a,b,c}$, hence 
\begin{align*}
    E_{a,b,c} = \bigwedge_{d \in [n] \setminus \{a,b,c\}} E_{a,b,c}^d.
\end{align*}

\end{document}